\newtheorem{algorithm}{Algorithm}
\newtheorem{linprog}{Linear Program}
\theoremstyle{claimstyle}
\newtheorem{empiricalclaim}[theorem]{Empirical Claim}
\DeclarePairedDelimiter{\ceil}{\lceil}{\rceil}
\DeclarePairedDelimiter\tuple{\langle}{\rangle}
\DeclarePairedDelimiter\upto{\lbrack}{\rbrack}
\DeclarePairedDelimiter\fl{\lfloor}{\rfloor}
\DeclarePairedDelimiterX\set[1]{\lbrace}{\rbrace}{
    
    #1%
}
\newcommand*{\Nat}{\mathbb{N}}
\newcommand*{\Real}{\mathbb{R}}
\renewcommand*{\leq}{\leqslant}
\renewcommand*{\geq}{\geqslant}
\newcommand{\ALGTS}{\textsc{ATS}\xspace}
\newcommand{\OPT}{\textsc{Opt}\xspace}
\newcommand{\tn}{\tilde{n}}
\newcommand{\wait}{w}
\newcommand{\waitTS}{w^\mathrm{ATS}}
\newcommand{\waitinit}{w^\mathrm{init}}
\newcommand{\X}{\mathcal{X}}
\newcommand{\Y}{\mathcal{Y}}
\newcommand{\dist}{\mathrm{dist}}
\newcommand{\fc}{\mathrm{open}}
\newcommand{\fcg}{\mathrm{open}(f)}
\newcommand{\sensa}{\lambda}
\newcommand{\sensb}{\xi}
\newcommand{\lpone}{LP1\xspace}
\newcommand{\lptwo}{LP2\xspace}
\title{Online Facility Location with Linear Delay}
\author{Marcin Bienkowski}
{Institute of Computer Science, University of Wrocław, Poland}
{marcin.bienkowski@cs.uni.wroc.pl}
{https://orcid.org/0000-0002-2453-7772}{Supported by Polish National Science Centre grant 2016/22/E/ST6/00499.}
\author{Martin Böhm}
{Institute of Computer Science, University of Wrocław, Poland}
{boehm@cs.uni.wroc.pl}
{https://orcid.org/0000-0003-4796-7422}{}
\author{Jarosław Byrka}
{Institute of Computer Science, University of Wrocław, Poland}
{jaroslaw.byrka@cs.uni.wroc.pl}
{https://orcid.org/0000-0002-3387-0913}{Supported by Polish National Science Centre grant 2020/39/B/ST6/01641.}
\author{Jan Marcinkowski}
{Institute of Computer Science, University of Wrocław, Poland}
{jasiekmarc@cs.uni.wroc.pl}
{https://orcid.org/0000-0002-6517-0014}{}
\authorrunning{M. Bienkowski, M. Böhm, J. Byrka, and J. Marcinkowski}
\keywords{online facility location, network design problems, facility location with delay, JMS algorithm, competitive analysis, factor revealing LP}
\begin{document}

\maketitle

\begin{abstract}
In the problem of online facility location with delay, a
sequence of $n$ clients appear in the metric space, and they need to be
eventually connected to some open facility. The clients do not have to be
connected immediately, but such a choice comes with a~certain penalty: each
client incurs a waiting cost (equal to the difference between its arrival and
its connection time). At any point in time, an algorithm may decide to open a
facility and connect any subset of clients to it. That is, an algorithm needs to
balance three types of costs: cost of opening facilities, costs of connecting
clients, and the waiting costs of clients.
We study a natural variant of this problem, where clients may be connected also to an \emph{already open} facility,
but such action incurs an extra cost: an algorithm pays for waiting of the
facility (a~cost incurred separately for each such ``late'' connection). This is
reminiscent of online matching with delays, where both sides of the connection
incur a waiting cost. We call this variant \emph{two-sided delay} to
differentiate it from the previously studied \emph{one-sided delay}, where
clients may connect to a~facility only at its opening time.

We present an $O(1)$-competitive deterministic algorithm for the two-sided delay
variant. Our approach is an extension of the approach used by Jain, Mahdian and
Saberi [STOC 2002] for analyzing the performance of \emph{offline}
algorithms for facility location. To this end, we substantially
simplify the part of the original argument in which a bound on the
sequence of factor-revealing LPs is derived. We then show how to
transform our $O(1)$-competitive algorithm for the two-sided delay
variant to $O(\log n / \log \log n)$-competitive deterministic
algorithm for one-sided delays. This improves the known $O(\log n)$
bound by Azar and Touitou [FOCS 2020]. We note that all previous
online algorithms for problems with delays in general metrics have at
least logarithmic ratios.
\end{abstract}

\section{Introduction}

The facility location problem~\cite{AaByMa16} is one of the best-known examples
of network design problems, extensively studied both in operations research and
in computer science. The problem is defined in a metric space $\X$.
An algorithm is given a set of $n$ clients and its
goal is to open a~set of facilities (chosen points of $\X$) minimizing the total
cost, defined as the sum of costs of opening facilities plus the costs of connecting
clients. Our focus is on the \emph{non-uniform} case, where the opening cost of
a facility may depend on its position in $\X$. The connection cost of a given
client is simply its distance to the nearest open facility. This simple statement
hides a~surprisingly rich combinatorial structure and gave rise to series of
algorithms and extensions. In particular, the problem is NP-complete and
APX-hard~\cite{GuhKhu99} and its approximation ratio has been studied in a long
sequence of
improvements~\cite{ShTaAa97,KoPlRa00,JaMaSa02,JaMMSV03,ChuShm03,ChaGuh05,MaYeZh06,ByrAar10},
with the current record of 1.488 proved by Li~\cite{Li13}.

In the online scenario, the set of clients is not known up-front, but it is
revealed to an~(online) algorithm one element at a time. Once a client becomes
known, an algorithm has to make an~irrevocable and immediate decision whether to
open additional facilities and to which facility the current client should be
connected.\footnote{The best option is clearly to connect a given client to the
closest facility, but some naturally defined algorithms may not have this
property.} As in the offline scenario, the algorithm is compared to the best
\emph{offline} solution, and in online scenarios, we use the name
\emph{competitive ratio} instead of approximation ratio. This scenario has been
fully resolved: tight asymptotic lower and upper bounds of $\Theta(\log n / \log
\log n)$ are known both for randomized and deterministic
algorithms~\cite{Meyers01,AnBeUH04,Fotaki07,Fotaki08}.

In the last few years, many online problems have been considered in scenarios
\emph{with delays}. In the case of online facility location, first studied by
Azar and Touitou~\cite{AzaTou19}, the clients arrive in time, and while each of
them has to be connected eventually, such action does not have to be executed
immediately. This additional degree of freedom comes, however, with a price:
each waiting client incurs an extra cost that (in the basic setting studied in
this paper) is equal to its total waiting time (time between its
arrival and its connection). We note that in terms of achievable competitive
ratios, the classic models and models with delays are rarely comparable as the
possibility of delaying actions is allowed also in the benchmark offline solution.


\paragraph*{Facility location with one-sided delay} 

This variant has been introduced and studied in~\cite{AzaTou19,AzaTou20}. There
each facility is \emph{ephemeral}: it is opened only momentarily at time $t$
chosen by an~algorithm, and all connections to this facility must be made at
time $t$. The algorithm can open another facility at the same location
at a different time $t'$, but the opening cost must be paid again. The waiting
costs of clients are as described above.

The best known algorithm for this problem variant is $O(\log
n)$-competitive~\cite{AzaTou20}.  (Interestingly, it is not known whether this
particular variant admits constant-factor approximation in the offline setting
when all client arrivals are known up-front.)

\paragraph*{Facility location with two-sided delay} 

We propose the following slight deviation from the one-sided variant described
above: once a~facility becomes open at time $t$, it remains open forever and can
be connected to in the future. However, any client that connects to such
facility at time $t' > t$ needs to pay an~\emph{additional} waiting cost of $t'
- t$. We call this amount \emph{facility-side waiting cost}, which needs to be
paid on top of the ``standard'' \emph{(client-side) waiting cost}. We emphasize that
such connections, dubbed \emph{late connections}, can be made both by clients
that arrived before facility opening and also after this time. Similarly to the
one-sided delay model, we allow opening of multiple facilities in the same location
at different points in time.


The two-sided model can be seen as an approximation of the
\emph{early-late adopter behavior} in crowdfunding models. In
crowdfunding platforms for technology
products~\cite{stanko2016crowdfunding} such as Kickstarter, any
specific project is started by gathering initial contributions by
enthusiasts up to a certain threshold, which should (in an ideal case)
imply opening of a production line for the specified technology
product. Contributors in this pre-production phase are called
\emph{early adopters}.

As we move forward in time, while the production may already
begin, it may still be possible for so-called \emph{late adopters} to
join the crowdfunding project on the Kickstarter website and receive
the final product. As early adoption is more beneficial for the
producer of the technology product, late adoption is sometimes penalized
with an increased cost of the same product compared to the early adoption.

In the framework of crowdfunding models, we can see online
facility location with two-sided delay as facility location in an
early-late adopter setting, where a technology product can be
manufactured at multiple factories and late adopters may join into the
crowdfunding scheme and thus contribute towards offsetting the cost of
the production while it is in progress. However, the clients who join
late need to deal with the missed-opportunity cost (the client-side
cost) as well as the late adopter increase in price (the facility-side
waiting cost).

\subsection{Our Results and Techniques}

Our first positive contribution is showing that facility
location with two-sided delay admits a constant-competitive online
algorithm, which is an important open problem for the one-sided
case. Namely, we show:

\begin{theorem}
\label{thm:two-sided}
There exists an $3.869$-competitive deterministic algorithm for the online facility location
problem with two-sided delay, where all waiting costs are equal to the waiting
times. 
\end{theorem}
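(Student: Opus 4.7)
The plan is to design a primal-dual style online algorithm \ALGTS that lifts the Jain--Mahdian--Saberi procedure into the two-sided delay setting by using real time as the dual-growth parameter. While a client $c$ with arrival time $a_c$ is unconnected, its dual variable $\alpha_c(t) = t - a_c$ grows at unit rate, coinciding exactly with its accumulated waiting cost. For every candidate facility $f \in \X$ we continuously track the contribution $\sum_c \max\set{0, \alpha_c(t) - \dist(c,f)}$, the sum being taken over still-unconnected clients, and we open $f$ the instant this quantity first reaches $\fcg$; at that moment every client with strictly positive contribution to $f$ is connected to $f$. A client whose dual later reaches $\dist(c,f)$ for some already-opened facility $f$ before it participates in a new opening is served as a \emph{late connection}, additionally paying a facility-side waiting cost $t - t_f$, where $t_f$ denotes the opening time of $f$.

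The analysis proceeds by charging the cost of \ALGTS locally to each facility it opens. For a facility $f$ opened at time $t_f$ with contributing client set $C_f$, the opening cost is exactly covered by the identity $\sum_{c \in C_f}(\alpha_c(t_f) - \dist(c,f)) = \fcg$, while each $c \in C_f$ separately incurs connection cost $\dist(c,f)$ and waiting cost $\alpha_c(t_f)$, giving a per-facility cost of exactly $2\fcg + 2\sum_{c \in C_f}\dist(c,f)$. Late connections are charged separately and bounded by a symmetric argument, exploiting the fact that both sides of a late connection accrue waiting cost at the same unit rate.

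To bound the competitive ratio I would fix any optimal solution \OPT and, for each facility $f^*$ opened by \OPT, study the star consisting of $f^*$ and the clients $C^*$ it serves, with connection costs $d^*_c = \dist(c,f^*)$ and \OPT-side waiting costs $w^*_c$. The key structural tool is a \emph{no-early-opening lemma}: at any time at which \ALGTS has not yet opened a hypothetical facility $\hat f$, the current contribution to $\hat f$ cannot exceed $\fc(\hat f)$. Specialising $\hat f$ to $f^*$ (and, via the triangle inequality, to reference facilities located near $f^*$) produces a family of linear constraints among the duals $\alpha_c$ of the clients in $C^*$ at the times \ALGTS serves them. Combining these constraints with the per-star charge from the previous paragraph yields a family of factor-revealing linear programs, one for each star size $k$.

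The main obstacle will be proving a uniform upper bound on the values of these linear programs across all $k$; this is precisely the step where the original JMS analysis becomes technically heavy. My plan is to exploit the two-sided symmetry between client-side and facility-side waiting, which makes the LP amenable to a substantially simpler dual-feasible certificate than in the original JMS treatment, and then to evaluate the resulting parametric bound to obtain the target ratio of $3.869$. Summing the per-star inequality over all stars of \OPT, together with the per-facility charging of the second paragraph, then yields the claimed $3.869$-competitiveness of \ALGTS.
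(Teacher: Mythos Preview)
Your proposal captures the high-level architecture of the paper's proof (dual-growing algorithm, star-by-star comparison to \OPT via a factor-revealing LP), but it has two genuine gaps that would prevent you from reaching the stated ratio.

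\textbf{Missing scaling parameter.} You set $\alpha_c(t) = t - a_c$, i.e.\ the budget grows at rate $1$. The paper's algorithm instead grows the budget at rate $\gamma > 1$, i.e.\ $\alpha_c(t) = \gamma(t - a_c)$, and the $3.869$ number is obtained only after optimising over $\gamma$ (the optimum is near $\gamma \approx 2.868$). With $\gamma = 1$ the late-connection rule degenerates: the residual budget $\alpha_c(t) - \dist(c,f)$ and the facility-side waiting $t - t_f$ both grow at rate $1$, so the budget can never ``catch up'' to pay for a late connection. Your alternative trigger (connect when the dual merely reaches $\dist(c,f)$) sidesteps this but breaks the accounting: the budget then covers only the connection cost, not the facility-side waiting, so the clean identity ``total cost $= (1+1/\gamma)\sum_j \alpha_j$'' fails for late-connected clients, and your promised ``symmetric argument'' for them does not go through. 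The correct rule is to connect late exactly when $t - t_f = \alpha_c(t) - \dist(c,f)$, which requires $\gamma > 1$.

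\textbf{The LP-bounding step is the real content.} You anticipate correctly that uniformly bounding the factor-revealing LPs $z_k(\gamma)$ over all $k$ is the crux, but your plan (``two-sided symmetry makes the LP amenable to a simpler dual certificate'') is not what actually works, and there is no indication such a closed-form certificate exists. The paper's method is different and is its main technical contribution: one defines a companion program \lptwo that differs from \lpone only by dropping the $i=\ell$ term from the opening-cost constraint, proves the sandwich $z_{k_1}(\gamma) \leq z_{k_1 k_2}(\gamma) \leq y_{k_1 k_2}(\gamma) \leq y_{k_2}(\gamma)$ via simple copying/averaging arguments, and then \emph{numerically} solves $y_{1500}(2.868) \approx 3.869$. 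The specific constant $3.869$ thus comes from a computer-assisted LP solve, not from an analytic dual; without this monotone-sandwich idea (or an equivalent), the sequence $z_k(\gamma)$ cannot be bounded and the proof does not close.
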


We analyze a natural greedy algorithm, which grows budgets with increasing
waiting delays and opens facilities for subsets of clients once sums of these
budgets reach certain thresholds. To analyze this algorithm, we use dual fitting
methods. Our analysis is a substantial extension of the approach used by Jain et
al.~\cite{JaMaSa02,JaMMSV03} for analyzing the performance of \emph{offline}
algorithms for (non-delayed) facility location. 

The central part of the analysis is a linear program (LP), parameterized by an integer
$k$, whose objective value is an upper bound on the competitive ratio of our
algorithm \emph{provided the number of clients in the input is at most
$k$}.
As the objective function of this LP grows with $k$, simply solving the LP would
yield the correct upper bound only for instances of limited size. As a replacement for the technical original argument in~\cite{JaMMSV03} we propose a much more intuitive one that is based on an upper bound to this sequence by the value of a finite linear program.

We would like to stress that we see our novel approach to the
competitive analysis of facility location dual-fitting LP as an
important contribution of this paper to the area of facility location
with delays. Our approach has a significant computer-assisted
component (\Cref{sec:computational}) and it can thus be quickly
deployed to give provable estimates on the potential competitive ratio
or an approximation ratio of factor-revealing linear programs in other
settings. Our code for the algorithmic part is publicly
available~\cite{datarepository}.

\medskip

Our second result is showing that $O(1)$-competitive algorithm for 
the two-sided variant yields an improved guarantee also for the
one-sided variant. 

\begin{theorem}
\label{thm:one-sided}
There exists an $O(\log n / \log \log n)$-competitive deterministic algorithm for the 
online facility location problem with one-sided delay,
where all waiting costs are equal to the waiting times. 
\end{theorem}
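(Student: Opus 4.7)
The plan is to simulate \ALGTS (the two-sided algorithm from Theorem~\ref{thm:two-sided}) in the one-sided model, paying only an $O(\log n / \log \log n)$ overhead. As a preliminary observation, any feasible one-sided schedule is also feasible in the two-sided model (merely with no late connections), so the two-sided optimum is at most the one-sided optimum. It therefore suffices to design a one-sided algorithm whose cost is within an $O(\log n / \log \log n)$ factor of the cost of \ALGTS.

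The one-sided algorithm follows the decisions of \ALGTS in real time. Whenever \ALGTS opens a facility $f$ at time $t_0$ together with a batch of initial connections, the same actions are performed in the one-sided model (legal since these connections occur at the opening instant). The nontrivial case is a \emph{late connection}, where \ALGTS connects a client $c$ (arrived at $t_c$) to $f$ at a later time $t' > t_0$, paying facility-side waiting $w_f := t' - t_0 \in (0, \fcg]$. To accommodate such requests, the one-sided algorithm schedules, for each \ALGTS facility, a geometric sequence of fresh copies of $f$ to be opened at the future times $t_0 + \delta_1, \dots, t_0 + \delta_K$ with $\delta_i := \fcg \cdot b^{\,i-K}$, for a base $b$ and a depth $K$ fixed later. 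A late-connection request at time $t'$ is dispatched to the smallest level $i$ with $\delta_i \geq w_f$, delaying $c$ until the instant $t_0 + \delta_i$; this is causal because $t' \leq t_0 + \delta_i$, so the $i$-th copy may still be scheduled online. Several late connections falling into the same level share that single fresh copy.

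The cost comparison splits into three parts. First, the opening cost of the one-sided algorithm is at most $K+1$ times that of \ALGTS (the original opening plus $K$ scheduled copies per \ALGTS facility). Second, the client-side waiting of $c$ grows from $t' - t_c$ to $t_0 + \delta_i - t_c$, i.e., by $\delta_i - w_f$; for $i \geq 2$ this is at most $(b - 1)\, w_f$, and for $i = 1$ it is at most $\delta_1 = \fcg / b^{K-1}$. Summing the level-$1$ increases over at most $n$ late connections per facility bounds the per-facility correction by $\fcg$ whenever $b^{K-1} \geq n$, so summing over facilities absorbs the correction into a constant multiple of the opening cost of \ALGTS; the level-$i \geq 2$ increases contribute at most a factor $b$ over the facility-side waiting of \ALGTS. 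Third, since facilities in the one-sided model are ephemeral, the one-sided algorithm pays no facility-side waiting at all. Combining, the cost of the one-sided algorithm is $O(\max(K, b)) \cdot \mathrm{cost}(\ALGTS)$.

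To minimize the overhead, we solve $\min \max(K, b)$ subject to $b^{K-1} \geq n$. Setting $K = b$ yields $(b-1)\log b \geq \log n$ and hence $K = b = \Theta(\log n / \log \log n)$. Combined with Theorem~\ref{thm:two-sided} and the inequality $\mathrm{OPT}_{\text{two-sided}} \leq \mathrm{OPT}_{\text{one-sided}}$, this establishes the claimed competitive ratio. The main technical hurdle is the bookkeeping of late connections of various magnitudes sharing a single discretization, together with verifying that \ALGTS never issues a late connection with $w_f > \fcg$; the latter is either an invariant of the natural greedy described for Theorem~\ref{thm:two-sided} or, if it occurs, is trivially handled by opening a brand-new facility, since the required $\fcg$ is already covered by the $w_f \geq \fcg$ that \ALGTS itself paid.
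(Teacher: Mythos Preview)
Your reduction is sound and takes a genuinely different route from the paper's. The paper does not rely on the bound $w_f\le\fcg$; instead it introduces the \emph{sensibility} property (\cref{def:sensibility}, \cref{lem:alg1_is_sensible}), which limits how many late connections can land in any window $(\tau+w,\tau+\sensa w]$. That property lets the paper choose the first copy time $\tau+b$ \emph{adaptively}---as the first moment the pending-client count times the elapsed delay reaches $\fcg$---and then compute an instance-specific surrogate $\tilde n=\Theta(\fcg/b)$, from which the geometric ratio $q=\sqrt{\log\tilde n}$ and the number of levels are derived online; sensibility is what guarantees that no late connection arrives after the last scheduled copy (\cref{lem:remaining_clients}). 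Your fixed ladder $\delta_i=\fcg\cdot b^{\,i-K}$, together with the observation $w_f\le\fcg$ for \cref{algo:two-sided} (or the clean fallback when it fails), bypasses sensibility entirely and gives a shorter argument.

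The one real gap is that your choice of $K$ and $b$ requires knowing $n$ in advance, which an online algorithm is not given; the paper's adaptive machinery exists precisely to avoid this, since $\tilde n$ is computable at time $\tau+b$ and is only shown to be $O(n)$ after the fact (\cref{lem:tn_bound}). You can patch this either by a guess-and-double on $n$ (with some care, since a facility whose ladder was built under a smaller guess may still receive late connections), or---more in the spirit of your argument---by making the \emph{first} level adaptive as the paper does: wait until the pending count times the delay hits $\fcg$, call that delay $\delta_1$, and then run your geometric ladder from $\delta_1$ up to $\delta_K=\fcg$; since $\fcg/\delta_1\le n$, choosing the ratio to balance $K$ against it recovers $O(\log n/\log\log n)$ without ever reading $n$. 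Either fix is routine, but one should be stated.
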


We prove this result via a reduction that can be applied to any
algorithm solving the two-sided variant provided it satisfies 
a certain technical condition that we call \emph{sensibility}.
Informally speaking this property means that the waiting costs associated
with late connections are not very large; 
we defer the precise definition to \cref{sec:factor-revealing}.

\cref{thm:one-sided} improves the known $O(\log n)$-bound by Azar and
Touitou~\cite{AzaTou20}. While the improvement is small, we note that all
previous online algorithms for problems with delays have at
least logarithmic ratios (in the number of used points of the metric space), so ours is the first to break this natural barrier. 

\subsection{Related work}

Recently many online graph problems have been considered in a variant that
allows requests to be delayed. Apart from the facility location problem studied in this
paper, examples include the Steiner tree problem~\cite{AzaTou19,AzaTou20}, multi-level
aggregation~\cite{BBCJSS13,Chroba14,BuFeNT17,AzaTou19,BBBCDF20,BBBCDF21},
Steiner forest/network~\cite{AzaTou20}, directed Steiner tree~\cite{AzaTou20},
multi-cut~\cite{AzaTou20}, online
matching~\cite{EmKuWa16,AACCGK17,AzChKa17,BiKrSc17,BiKrLS18,LiPaWW18,EmShWa19,AzaFan20,AzReVa21},
set cover~\cite{CaPrSV18,AzChKT20} and $k$-server (known in this setting as online service
with delay)~\cite{AzGaGP17,BiKrSc18,AzaTou19}.

That said, the concept of delaying requests itself is not new. Famous studied problems
include the TCP acknowledgement
problem~\cite{DoGoSc01,KaKeRa03} and joint replenishment
problem~\cite{BuKLMS13,BBCJNS14} (that are equivalent to the recently studied
multi-level aggregation problem on one-level or two-level trees).

\paragraph*{General waiting costs}

The waiting costs considered in this paper are equal to waiting times.
Another studied case are deadlines, where waiting costs are zero till a
request-specific time (the deadline) and infinite afterwards. For some problems, easier
algorithms or better bounds are known when the waiting costs are in the deadline
form (see, e.g.,~\cite{BBCJNS14,BuFeNT17}).

Many of the results listed above can be extended to waiting costs being
arbitrary non-decreasing left-continuous functions of waiting times. These extensions
are straightforward if an algorithm is defined by simple thresholds on (sums of)
waiting costs; when these thresholds are reached, they trigger an appropriate
action of the algorithm. For instance, algorithms for the TCP
acknowledgement problem were constructed for linear waiting costs, but they can be
trivially extended to general costs.

There are however a few cases where general waiting costs are more problematic. 
Most notably, the online matching problem was studied for linear waiting 
costs~\cite{EmKuWa16,AACCGK17,AzChKa17,BiKrSc17,BiKrLS18,EmShWa19,AzaFan20},
then shown to be more difficult (in terms of achievable competitive ratios)
for convex waiting costs~\cite{LiPaWW18}, and only recently competitive algorithms 
were shown for concave waiting costs~\cite{AzReVa21}.

The algorithms presented in this paper also fall into the latter category: 
our LP-based analysis heavily depends on the linearity of the
waiting functions, and thus our algorithms cannot be easily extended 
to general waiting costs. (We note that the $O(\log n)$-competitive algorithm 
by Azar and Touitou~\cite{AzaTou20} can handle arbitrary waiting costs.)

\subsection{Remark about Facility-Side Waiting Costs}

Recall that in the two-sided variant that we study in our paper, 
we assume that each late connection incurs an additional waiting cost 
at the facility side (equal to the time that passes between facility
opening and client connection). Below, we argue that setting this cost
to zero would cause the optimal competitive ratio to be 
$\Theta(\log n / \log \log n)$. This shows that for the $O(1)$-competitive 
result of \cref{thm:two-sided}, some assumptions about waiting cost functions 
are necessary. 

\begin{observation}
In the two-sided variant of the facility location problem with delays, 
setting facility-side waiting costs to zero causes the optimal competitive 
ratio (both deterministic and randomized) to become asymptotically equal 
to $\Theta(\log n /\log \log n)$.
\end{observation}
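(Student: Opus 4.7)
The plan is to establish matching upper and lower bounds of $\Theta(\log n/\log\log n)$ by reducing the zero-facility-side-cost variant to classical (non-delayed) online facility location. For the upper bound, I would first argue that on the shared underlying instance the offline optimum is the same in both models: any classical FL solution can be realized in the two-sided model by opening each used facility at the arrival time of its earliest assigned client and connecting every client immediately, incurring zero waiting; conversely, the time-projection of any two-sided solution is a feasible classical FL solution whose cost omits only the non-negative waiting term. Combining this equality with the optimal $O(\log n/\log\log n)$-competitive deterministic online FL algorithm of Fotakis, invoked with the rule ``connect every client upon arrival'', yields a two-sided algorithm whose total cost equals its classical FL cost, and hence is within the claimed factor of the offline optimum.

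For the lower bound I would lift Fotakis's $\Omega(\log n/\log\log n)$ adversarial construction into the two-sided setting by spacing out arrivals. Given a deterministic two-sided algorithm $\mathcal{A}$, the adversary reveals the $i$-th client at time $iT$ for a parameter $T$ of order $\Theta(\log n/\log\log n)$ times the offline optimum, and at each such step selects the next client as Fotakis's adaptive adversary would, based on $\mathcal{A}$'s currently open facilities and confirmed connections at that moment. I would then split into two cases: either (a) at each inspection time $iT$, $\mathcal{A}$ has already processed all clients revealed so far, in which case its decisions on $c_1,\dots,c_i$ are a deterministic function only of $c_1,\dots,c_i$ and thus replayable by a classical online FL algorithm on the same adaptive sequence, so Fotakis's bound forces opening-plus-connection cost at least $\Omega(\log n/\log\log n)$ times the optimum; or (b) some client has waited at least $T$, contributing that much to the waiting cost by itself.

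The subtle step is verifying that case (a) is genuinely reducible to classical online FL despite the two-sided algorithm's freedom to postpone decisions. The point is that the adversary observes $\mathcal{A}$ only at the discrete inspection times $iT$; under case (a), that snapshot completely encodes the effect of $\mathcal{A}$ on $c_1,\dots,c_i$, and since $\mathcal{A}$ must commit to each such decision before $c_{i+1}$ is revealed, it cannot use future information, exactly matching the information model of classical online FL and allowing Fotakis's adaptive lower bound to apply verbatim. The randomized bound then follows from the same argument combined with Yao's principle applied to the randomized version of Fotakis's lower bound.
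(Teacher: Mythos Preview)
Your argument is correct, and the upper-bound half matches the paper's essentially verbatim: both observe that with zero facility-side waiting the offline optimum coincides with the classical (non-delayed) facility-location optimum, so one may simply run Fotakis's $O(\log n/\log\log n)$-competitive algorithm and connect every client on arrival.

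For the lower bound you take a genuinely different route from the paper. You space arrivals by a fixed gap $T=\Theta(\log n/\log\log n)\cdot\textsc{Opt}$ and perform a two-case analysis (either the algorithm has committed to all past clients by each inspection time, reducing to classical online FL, or some client has already accrued waiting cost $\ge T$). This is sound, but it requires choosing $T$ in advance from a known upper bound on \textsc{Opt} for Fotakis's instance family, and it requires checking that in case~(a) the adversary's snapshot at each time $iT$ contains exactly the information Fotakis's adaptive strategy needs. The paper sidesteps all of this with a single line: the adversary simply withholds the next request until the algorithm has served the previous one. This makes waiting pointless for the online algorithm (any delay only adds cost and reveals no new information), so the algorithm is forced to behave exactly like a classical online FL algorithm, and Fotakis's lower bound applies directly with no case split, no parameter $T$, and no bookkeeping. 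Your approach does have one mild advantage: because the arrival times are fixed rather than reactive, it is closer to an oblivious construction and hence adapts more cleanly to the randomized case via Yao's principle; the paper's adaptive-timing argument is terser but leaves the oblivious-adversary randomized bound slightly implicit.
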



\begin{proof}
Assume no penalty for facility-side waiting. For any input instance, without
an~increase of the cost, \OPT may open all its facilities at time $0$ and connect
all clients immediately when they arrive. Thus, the optimal solution incurs no
waiting cost at all.

As the waiting cost can be avoided also for an online algorithm (by serving all
clients immediately upon their arrival), the desired competitive ratio can be
attained by running an~$O(\log n / \log \log n)$-competitive algorithm
(deterministic or randomized) for the online facility location problem (without
delays)~\cite{Meyers01,Fotaki08},

For showing a lower bound, we may simply use an adversarial strategy for
the online facility location problem (without delays)~\cite{Fotaki08}; however,
now the next request is presented only after an algorithm serves the
previous one. This way, waiting becomes useless for an~online algorithm, and the
lower bound of $\Omega(\log n / \log \log n)$~\cite{Fotaki08} applies also for
the waiting model.
\end{proof}

\subsection{Preliminaries}

We use the following notions throughout the paper.

The pair $(\X,\dist)$ denotes the underlying metric space with its distance function,
and $\Y \subseteq \X$~denotes the positions of potential facilities.
The function $\fc: \Y \to \Real_{\geq 0} \cup \{\infty\}$ defines the cost of 
opening a facility at a specific location.

The clients are numbered from $1$ to $n$ and arrive in time; each is associated with a point in $\X$. Their total number is not known up-front to 
an~online algorithm.
For a client $j$, we use $x_j$ to denote its position and $t_j$ to denote the time of its arrival.
We say that a client is \emph{active} from the time of its
arrival until it gets connected to a facility by an online algorithm, and it is \emph{inactive}
after the connection.

At any time, an algorithm may open a facility at any point $y \in \Y$, 
paying \emph{opening cost}~$\fc(y)$. 
An~active client $j$ may be connected by an algorithm at time $t^c_j \geq t_j$
to a facility that was open at time $\tau$, such that
\begin{itemize}
    \item $\tau = t^c_j$, for the one-sided delay variant;
    \item $\tau \leq t^c_j$, for the two-sided delay variant.
\end{itemize}
Such a connection incurs a \emph{connection cost} $\dist(x_j, y)$ 
and two types of waiting costs: a~\emph{client-side waiting cost} $t^c_j - t_j$ and 
a \emph{facility-side waiting cost} $t^c_j - \tau$. Note that the latter waiting cost is 
always zero for the one-sided delay variant.

The goal of an algorithm is to minimize the total cost, defined as the sum of 
opening costs, connection costs, and waiting costs.

\section{Algorithm for the Two-Sided Variant}

We will now describe the algorithm for the online facility location with
two-sided linear waiting cost. 
Our algorithm is parameterized with a constant $\gamma > 1$ that will be fixed later.
We say that an~active client $j$ at time $t \geq t_j$, after experiencing waiting cost of $t -
t_j$, has a~\emph{connectivity budget} 
\[
    \alpha_j(t)= \gamma \cdot (t - t_j).
\]
In the analysis, we use $\alpha_j$ to denote the connectivity budget of client $j$ at
time $t^c_j$, when it becomes connected to a facility.

At any time $t$, for each potential facility location $y \in \Y$, an active client~$j$ has an offer of
a~contribution towards the opening of a facility at $y$ equal to
$\beta_{j}^t(y) = \max \{ 0, \alpha_j(t) - \dist(x_j, y)\}$. When client $j$ becomes
connected (and inactive), it no longer offers any contribution.

The algorithm follows the natural continuous flow of time of the online sequence
and reacts to the events occurring throughout its runtime.
\begin{algorithm}
  \label{algo:two-sided}
    At any time $t$, do the following.
    \begin{alphaenumerate}
        \item If a new client $j$ arrives at point $x_j$ (client $j$
        becomes active): From this time onward start growing its
        connectivity budget.
        
        \item \label{item:case2} If for any location $y \in \Y$, the sum of offered contributions
        $\beta_{j}^t(y)$ towards this location from all the currently active
        clients reaches $\fc(y)$ (the facility opening cost at $y$): 
        Let $A^t(y)$ be the set of active clients $j$ for which
        $\alpha_j(t) \geq \dist(x_j, y)$, i.e., those that can afford the
        distance. Open a facility at $y$, and connect every
        client $j \in A^t(y)$ to it. 
        
        \item \label{item:case3} If for a facility that has already been open at time $\tau \leq t$ at 
        location $y$, and for an active client~$j$, it holds that $t -
        \tau = \alpha_j(t) - \dist(x_j, y)$: Connect client $j$ to this facility. We call this 
        action \emph{late connection}.
    \end{alphaenumerate}
In Case \ref{item:case2} and Case \ref{item:case3}, all clients that get connected become inactive.
\end{algorithm}

\subsection{Basic observations}

One may observe that \cref{algo:two-sided} is a generalization of 
(the simpler version of) the algorithm by Jain et al.~\cite{JaMaSa02}. Furthermore, if it
is run on an instance where all clients appear at the same time, it produces the
same solution (the same facility locations and the same connections) as the
original offline approximation algorithm. 

It is convenient to think that the connectivity budget of a client is first spent on connection cost,
and the remaining part either contributes to the opening of a new facility or pays for the facility-side 
waiting cost of an already open facility.

\begin{observation}
The total cost of the solution produced by \cref{algo:two-sided} is 
$(1+1/\gamma) \cdot \sum_j \alpha_j$.
\end{observation}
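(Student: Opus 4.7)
The plan is to split the total cost of the algorithm into its four atomic parts: opening costs, connection costs, client-side waiting costs, and facility-side waiting costs, and then charge each of them to the quantity $\sum_j \alpha_j$ in a transparent way, leveraging the intuition (already stated right before the observation) that the budget is first spent on distance and then on either facility opening or facility-side delay.

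First, I would handle the client-side waiting costs. By definition of $\alpha_j = \gamma \cdot (t_j^c - t_j)$, the client-side waiting cost of client $j$ equals $t_j^c - t_j = \alpha_j/\gamma$, so summing over all clients gives
\[
\sum_j (t_j^c - t_j) \;=\; \frac{1}{\gamma}\sum_j \alpha_j.
\]

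Next, I would show that opening costs, connection costs, and facility-side waiting costs together sum to exactly $\sum_j \alpha_j$. For this I would argue a per-client identity: for each client $j$ connected to a facility at location $y_j$, we have
\[
\alpha_j \;=\; \dist(x_j, y_j) \;+\; r_j,
\]
where $r_j$ is a ``residual contribution'' that I will match against either opening or facility-side waiting. The case analysis depends on how $j$ is connected. In Case~\ref{item:case2}, client $j$ is connected at the very moment a facility opens at $y_j$, and by the triggering condition $r_j = \beta_j^{t_j^c}(y_j) = \alpha_j - \dist(x_j,y_j)$ is exactly the contribution of $j$ to paying the opening cost of $y_j$; summed over all clients served at the opening of $y_j$, these residuals cover $\mathrm{open}(y_j)$ precisely. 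In Case~\ref{item:case3}, the late-connection trigger gives $r_j = \alpha_j - \dist(x_j,y_j) = t_j^c - \tau$, which is precisely the facility-side waiting cost charged for this late connection.

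Thus every client is connected either in Case~\ref{item:case2} or Case~\ref{item:case3}, and in both cases the identity $\alpha_j = \dist(x_j, y_j) + r_j$ holds with $r_j$ paying either an opening contribution or a facility-side delay. Summing over all clients yields
\[
\sum_j \alpha_j \;=\; \sum_j \dist(x_j,y_j) \;+\; \sum_{\text{opened }y} \fc(y) \;+\; \sum_{\text{late }j}(t_j^c - \tau_j),
\]
i.e.\ the sum of connection, opening, and facility-side waiting costs. Adding the client-side waiting cost $\frac{1}{\gamma}\sum_j \alpha_j$ gives the total cost $(1 + 1/\gamma)\sum_j \alpha_j$.

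The only subtle point to verify is that the matching between residuals and opening/facility-side costs is exact, not just an inequality: one must check that Case~\ref{item:case2} fires precisely when the offers \emph{reach} $\fc(y)$ (so offers sum to $\fc(y)$, not more), and that in Case~\ref{item:case3} the trigger gives equality rather than an inequality. Both are immediate from the wording of the algorithm, which treats the thresholds as tight equalities in a continuous-time process, so this is a sanity check rather than a genuine obstacle.
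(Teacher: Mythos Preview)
Your proof is correct and follows exactly the same decomposition as the paper's own (very terse) argument: client-side waiting equals $(1/\gamma)\sum_j \alpha_j$ by definition, and the remaining three cost types sum to $\sum_j \alpha_j$ because each budget splits into distance plus a residual that exactly pays either an opening share or a facility-side delay. You have simply written out in full the per-client accounting that the paper leaves implicit.
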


\begin{proof}
The sum of opening costs, connection costs and facility-side waiting costs
in the produced solution equals the sum of final connectivity budgets 
$\sum_j \alpha_j$. The total client-side waiting
cost is $(1/\gamma) \cdot \sum_j \alpha_j$. 
\end{proof}

To estimate the competitive ratio of the algorithm, it thus suffices to
compare the cost of the optimal solution to $(1 + 1/\gamma) \cdot \sum_j \alpha_j$, which we do 
in \cref{sec:factor-revealing}.

\subsection{Sensibility}

Our later construction for the one-sided waiting variant requires that the used
online algorithm for the two-sided variant has the following property, which bounds the number of 
clients that connect late to an already open facility. 

\begin{definition}[sensibility]
\label{def:sensibility}
Fix any $\sensa > 1$ and $\sensb > 1$. An algorithm solving the
two-sided variant is called $(\sensa,\sensb)$-sensible if it satisfies the
following property for any facility $f$ opened at time $\tau$ and location $y$:
for any $w > 0$, the number of clients connected to $f$ within the interval
$\interval[open left]{\tau + w}{\tau + \sensa \cdot w}$ is at most $\sensb
\cdot \fc(y) / w$.
\end{definition}


We show that for some constants $\sensa$ and $\sensb$, our algorithm 
is $(\sensa,\sensb)$-sensible: if clients connecting late to an 
open facility incurred large (facility-side) waiting costs, then our algorithm would 
rather create a new copy of this facility, and connect these clients
to the copy.

\begin{lemma}
\label{lem:alg1_is_sensible}
Fix a parameter $\gamma > 1$ of \cref{algo:two-sided}. 
For any $\sensa \in (1, 1+1/(\gamma-1))$, \cref{algo:two-sided} is 
$(\sensa, (\gamma-(\gamma-1)\cdot \sensa)^{-1})$-sensible.
\end{lemma}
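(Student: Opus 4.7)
The plan is to pick a witness time $t^{*} := \tau + w$ and show two things: (i) every client $j\in J$, where $J$ is the set of clients connected to $f$ in $(\tau+w, \tau+\sensa w]$, has already arrived and is still active at $t^{*}$; and (ii) each such $j$ already offers a nontrivial contribution $\beta_j^{t^{*}}(y) \geq w(\gamma - (\gamma-1)\sensa)$ toward the location $y$. Once these are established, the facility-opening rule (Case (b) of \cref{algo:two-sided}) immediately caps $|J|$ at $\fc(y) / (w(\gamma - (\gamma-1)\sensa))$, which is exactly $\sensb \cdot \fc(y)/w$.

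To carry out (i), the late-connection identity $t^c_j - \tau = \gamma(t^c_j - t_j) - \dist(x_j, y)$ rearranges into $\dist(x_j, y) = (\gamma-1)t^c_j - \gamma t_j + \tau$. Combining the non-negativity of the distance with the upper bound $t^c_j \leq \tau + \sensa w$ gives $t_j \leq \tau + (\gamma-1)\sensa w/\gamma$, and the hypothesis $\sensa < 1 + 1/(\gamma-1) = \gamma/(\gamma-1)$ is precisely what makes the coefficient $(\gamma-1)\sensa/\gamma$ strictly less than $1$, so $t_j < \tau + w = t^{*}$. Thus every $j\in J$ has arrived by $t^{*}$, and since $t^c_j > t^{*}$ it is still active then. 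Plugging the same formula for $\dist(x_j,y)$ into $\beta_j^{t^{*}}(y) = \gamma(t^{*} - t_j) - \dist(x_j,y)$ collapses it neatly to $\gamma w - (\gamma-1)(t^c_j - \tau)$, which using $t^c_j - \tau \leq \sensa w$ is at least $w(\gamma - (\gamma-1)\sensa)$, a quantity that the same hypothesis on $\sensa$ keeps positive.

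For the final count, consider the aggregate offer $S(t) := \sum_{j\text{ active at } t} \beta_j^t(y)$ toward $y$. It grows continuously between connection events, and whenever it reaches $\fc(y)$ Case (b) fires and every client with $\beta_j^t(y)>0$ is connected at that instant; hence $S(t)\leq \fc(y)$ throughout. If we had $S(t^{*}) = \fc(y)$, this triggering would occur at $t^{*}$ and connect every $j\in J$ (each contributing positively) at time $t^{*}$, contradicting $t^c_j > t^{*}$. So in fact $S(t^{*}) < \fc(y)$, and the per-client bound from (ii) yields $|J|\cdot w(\gamma - (\gamma-1)\sensa) \leq S(t^{*}) < \fc(y)$, which rearranges to the desired $|J| \leq \sensb\cdot \fc(y)/w$. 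I expect step (i) to be the main obstacle: it is not a priori obvious that every client late-connecting during $(\tau+w, \tau+\sensa w]$ must have arrived strictly before $t^{*}$, and the algebra reveals that this is precisely the content of the interval condition on $\sensa$—the lemma would fail without it.
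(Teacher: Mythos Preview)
Your proof is correct and follows essentially the same approach as the paper's: fix the witness time $\tau+w$, show each late-connecting client already has residual budget (your $\beta_j^{t^*}(y)$, the paper's $r_j(\tau+w)$) at least $w(\gamma-(\gamma-1)\sensa)$ there, and cap the sum of these by $\fc(y)$ via Case~(b). The only cosmetic difference is that the paper parameterises with $g$ (the time the residual budget crosses zero) and $h$ (the relative connection time), whereas you extract $\dist(x_j,y)$ directly from the late-connection identity; your explicit check that $t_j<t^*$ and the strict inequality $S(t^*)<\fc(y)$ are a bit more carefully stated than in the paper, but the argument is the same.
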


\begin{proof}
Fix a facility $f$ opened at time $\tau$ at location $y$.
Let $C_w$ be the set of clients that are connected late to $f$ 
within interval $\interval[open left]{\tau + w}{\tau + \sensa \cdot w}$.
Take any client $j \in C_w$ and let $t^c_j = \tau + h$ be its connection time.
We define the \emph{residual budget} of client $j$ at time $t$ as $r_j(t) = \alpha_j(t) - \dist(x_j, y)$. 

We now estimate the residual budget of client $j$ at time $\tau + w$.
Let $\tau + g$ be the time when its residual budget becomes zero, i.e., 
$r_j(\tau + g) = 0$. Its residual budget at time $\tau + h$ is then
$r_j(\tau + h) = \alpha_j(\tau+h) - \alpha_j(\tau+g) = \gamma \cdot (h-g)$.
On the other hand, $r_j(\tau + h) = (\tau + h) - \tau = h$, as $j$ forms a late connection to $f$ at time $\tau+h$.
Hence, $\gamma \cdot g = (\gamma - 1) \cdot h \leq (\gamma -1) \cdot \sensa \cdot w$,
which implies $r_j(\tau + w) = \gamma \cdot (w-g) \geq
(\gamma - (\gamma-1) \cdot \lambda) \cdot w$. 
(Note that for our choice of $\lambda$, this amount is positive.)

By the definition of \cref{algo:two-sided}, the residual budgets are spent either 
on opening new facilities or on facility-side waiting of an already opened facility.
Hence, at time $\tau + w$, the sum of residual budgets 
of all clients from $C_w$ cannot be larger than $\fc(y)$, as in such case \cref{algo:two-sided}
would open another copy of the facility at $y$.
This argument implies that $|C_w| \cdot (\gamma - (\gamma-1) \cdot \lambda) \cdot w \leq \fc(y)$,
which concludes the proof.
\end{proof}

For example, by \cref{lem:alg1_is_sensible}, \cref{algo:two-sided} 
with $\gamma = 2$ is 
$(3/2, 2)$-sensible.

\section{Competitive Analysis via Factor Revealing LP}
\label{sec:factor-revealing}

Fix an optimal offline solution. We will heavily use the structure of this
solution being a~collection of stars, each of them composed of a single open
facility and a set of clients connected in the optimal solution to this
facility. Just as in the analysis of the JMS algorithm~\cite{JaMaSa02,JaMMSV03}, we will focus on 
a~single star~$S$ of \OPT, and compare $\sum_{j \in S} \alpha_j$ to
the cost of this star.

Let $\tau$ be the time of opening the facility $f$ in the considered star $S$ of
the optimal solution. Note that our online algorithm could connect clients $j
\in S$ to facilities opened by the online algorithm both before and after
$\tau$. For a client $j \in S$ that arrived at time $t_j$ and got connected by
the algorithm at time $t_j^c$, we set $a_j = t_j - \tau$ and $s_j = t_j^c -
\tau$ to denote the arrival time and the \emph{service time} of $j$ (time when
$j$ is connected to a facility by the algorithm) relative to~$\tau$. Note that
our algorithm grows $\alpha_j$ until it reaches value $\gamma \cdot (s_j -
a_j)$. Variable~$d_j$ will denote the distance between the locations of the client $j$ and
the facility $f$. Let $\fcg$ denote the cost of opening the facility $f$.

Consider the following factor revealing LP:

\begin{linprog}\label{lp:1}\hypertarget{lpone}{}
\begin{subequations}%
    \label{eq:lp-lowerbound}
    \begin{align}
        z_k{(\gamma)} =
        \max \frac
            {(1+\gamma) \cdot \sum_{i = 0}^{k-1} (s_i - a_i)}
            {\fcg+ \sum_{i = 0}^{k-1} (d_i + |a_i|)}%
        \label{lb:max}
        && \\
        s_i \leq s_{i+1} &&  \forall_{0 \leq i < k-1}%
        \label{lb:time_order}
        \\
        (\gamma-1) \cdot s_i - \gamma \cdot a_i 
            \leq  d_i + d_j +
            (\gamma-1) \cdot s_j - \gamma \cdot a_j && \forall_{0 \leq j < i < k}%
        \label{lb:triangle_ineq}
        \\
        \sum_{i = \ell}^{k-1} \max{\{\gamma \cdot (s_\ell - a_i) - d_i, 0\}} \leq \fcg
        &&\forall_{\ell < k}%
        \label{lb:f_lowerbound}
        \\
        d_i \geq 0, s_i \geq a_i && \forall_{0 \leq i < k}%
        \label{lb:non-negative}
    \end{align}
\end{subequations}
\end{linprog}
\begin{lemma}\label{lem:zk-star-bound}
    $z_k(\gamma)$ is an upper bound on the competitive ratio of the algorithm
    for a fixed value of parameter~$\gamma$ on a star of \OPT with $k$ clients. 
\end{lemma}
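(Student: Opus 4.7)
The plan is to show, for each star $S$ of \OPT\ containing $k$ clients, that the triple of sequences $(a_i, s_i, d_i)_{i=0}^{k-1}$ read off from this star is a feasible solution to \lpone\ whose objective value equals the per-star ratio of (the algorithm's cost attributed to clients of $S$) over (\OPT's cost on $S$). The claimed bound will then follow from the definition of $z_k(\gamma)$ as a maximum. Concretely, fix such a star with facility $f$ opened at time $\tau$ and location $y_f$ in \OPT, and order its clients by service time so that $s_0 \leq \dots \leq s_{k-1}$, with $s_i = t^c_i - \tau$, $a_i = t_i - \tau$ and $d_i = \dist(x_i, y_f)$. Because \cref{algo:two-sided} grows $\alpha_i$ up to $\gamma(s_i - a_i)$, the numerator $(1+\gamma)\sum_i (s_i - a_i)$ equals $(1+1/\gamma)\sum_{i\in S}\alpha_i$, which is precisely the portion of the algorithm's total cost charged to $S$ in the earlier observation. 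The denominator $\fcg + \sum_i (d_i+|a_i|)$ is exactly \OPT's cost on $S$ because $|a_i|$ records the client-side wait when $a_i < 0$ and the facility-side wait when $a_i > 0$.

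It then remains to check the constraints. (\ref{lb:time_order}) and (\ref{lb:non-negative}) hold by construction. For the triangle-like constraint (\ref{lb:triangle_ineq}) with $j<i$, let $F_j$ be the algorithm's facility hosting $j$, opened at time $\tau_{F_j}$ at location $y'_j$, and set $d'_j = \dist(x_j, y'_j)$. A short case analysis on whether $j$ was connected via Case~2 ($\tau_{F_j}=\tau+s_j$ and $d'_j\leq \alpha_j$) or Case~3 (the late-connection equation gives $\tau_{F_j}=\tau+(1-\gamma)s_j+\gamma a_j + d'_j$) produces in both cases the bound $\tau - \tau_{F_j}\leq (\gamma-1)s_j - \gamma a_j - d'_j$. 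Since $F_j$ is already open by time $\tau+s_i$, the Case~3 trigger would have late-connected $i$ to $F_j$ by the time $t^*$ satisfying $t^*-\tau_{F_j}=\alpha_i(t^*)-\dist(x_i,y'_j)$, so $s_i\leq t^*-\tau$. Combining this inequality with the bound on $\tau - \tau_{F_j}$ and the metric inequality $\dist(x_i,y'_j)\leq d_i+d_j+d'_j$ produces exactly (\ref{lb:triangle_ineq}).

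For the opening-cost constraint (\ref{lb:f_lowerbound}), consider the algorithm's state just before time $\tau+s_\ell$. The clients $\ell,\dots,k-1$ are still active, and each of them contributes toward $y_f$ exactly $\max\{0,\gamma(s_\ell-a_i)-d_i\}$ (the max is zero both when $i$ has not yet arrived and when $d_i$ exceeds its current budget), so the sum of these values is the left-hand side of (\ref{lb:f_lowerbound}). I claim that the total contribution toward $y_f$ from all active clients cannot exceed $\fcg$ at this moment: whenever this total reaches $\fcg$, Case~2 opens a copy of $f$ at $y_f$ and turns every positive contributor inactive (they all satisfy $\alpha\geq d$ and hence lie in $A^t(y_f)$), dropping the sum to zero; between successive firings the sum grows continuously but cannot re-reach $\fcg$ without triggering another opening. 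Restricting to clients of $S$ preserves the inequality, establishing (\ref{lb:f_lowerbound}). I expect this last constraint to be the main obstacle, since it requires careful bookkeeping of repeated Case~2 triggerings at the same location and of late-connection dynamics in between them, whereas the other constraints follow directly from the algorithm's rules and the metric triangle inequality.
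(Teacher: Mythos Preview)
Your approach is essentially the paper's: order the clients of the star by service time, verify that the resulting $(a_i,s_i,d_i)$ satisfy each LP constraint, and observe that the objective equals the per-star cost ratio. Your derivations for \eqref{lb:f_lowerbound} and for the two cases of how $j$ was connected are in fact more explicit than what the paper writes out.

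There is one small but genuine gap. You order by service time without a tie-breaking rule, and your argument for \eqref{lb:triangle_ineq} implicitly assumes $s_i>s_j$: you claim the Case~3 equation would have fired for $i$ against $F_j$ by time $t^*$ and hence $s_i\le t^*-\tau$. When $s_i=s_j$ and $j$ was connected in Case~2 (so $\tau_{F_j}=\tau+s_j=\tau+s_i$), client $i$ may be in $A^{\tau_{F_j}}(y'_j)$ with $\alpha_i(\tau_{F_j})>\dist(x_i,y'_j)$; then the formal solution $t^*$ of the Case~3 equation lies \emph{before} $\tau_{F_j}$, and the inequality $s_i\le t^*-\tau$ fails. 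The paper handles this by breaking ties in favor of the earlier-arriving client, so that $s_i=s_j$ with $j<i$ forces $a_j\le a_i$, after which \eqref{lb:triangle_ineq} becomes $\gamma(a_j-a_i)\le d_i+d_j$, which is trivially true. Adding this tie-breaking rule and the one-line check for the $s_i=s_j$ case closes the gap; the rest of your proposal is correct.
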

\begin{proof}
    It suffices to argue that the
    constraints~(\refeq{lb:time_order}-\refeq{lb:non-negative}) are satisfied for
    any values $d_i, a_i, s_i$ representing the situation of clients from a
    single star of \OPT in an actual run of the online algorithm.

    We consider the set of clients in the order of them being connected by the
    online algorithm. If two clients are connected at the same time, we first
    take the one that arrived first. Constraints~\eqref{lb:time_order} are
    trivially satisfied by this ordering.

    Constraints~\eqref{lb:f_lowerbound} reflect the situation just before time $t_\ell^c$
    when client $\ell$ gets connected by \cref{algo:two-sided}.
    The left-hand side is a sum of the contributions of still active clients
    towards opening a facility in the very same spot as \OPT has
    a facility for this star. Obviously, these contributions cannot exceed the
    cost of opening a facility.

    Finally, to argue for Constraints~\eqref{lb:triangle_ineq} being satisfied
    we consider two cases. If $s_i = s_j$, then $a_j \leq a_i$, and the
    constraint is trivially satisfied. Otherwise, $s_i > s_j$. Consider the
    moment just before client $i$ gets connected. Client $i$ cannot have a budget
    that would be more than sufficient to connect to the facility where client
    $j$ is connected. The connectivity budget of $i$ is then $\gamma \cdot (s_i
    - a_i)$ and the distance to the facility serving $j$ plus waiting at this
    facility for $i$ can be upper bounded by $d_i + d_j+ (s_i - s_j) + \gamma
    \cdot (s_j - a_j)$, see Figure~\ref{fig:triangle}.
\end{proof}
\begin{figure}
    \centering
    \includegraphics[width=10cm]{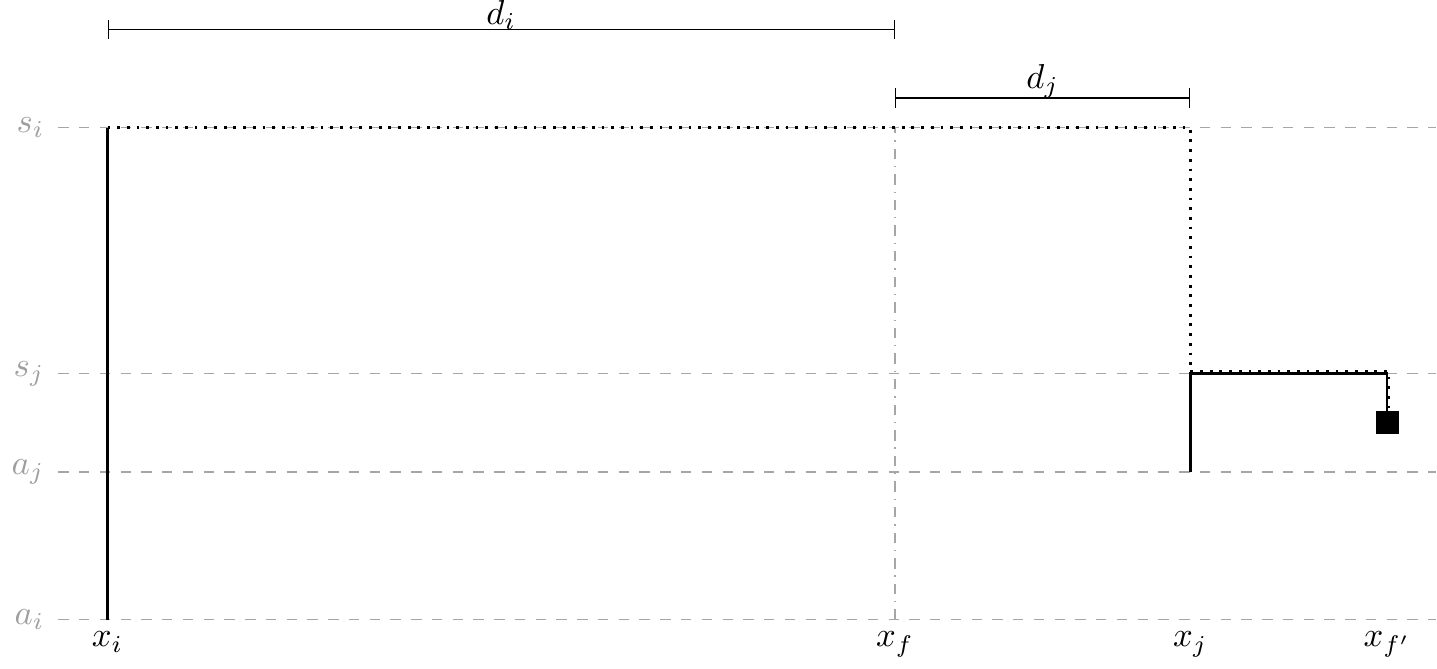}
    \caption{Illustration for Constraint~\eqref{lb:triangle_ineq}
    (the \emph{triangle inequality}). Before time $s_i$, the budget of client
    $i$ must not be sufficient to connect $i$ to the facility $f'$ currently
    serving $j$---otherwise the algorithm would have already connected it. The
    cost of such a connection can be bounded by a sum of $d_i + d_j$ (upper
    bounds $\dist(x_i, x_j)$), $s_i - s_j$ (the waiting cost on the facility side
    since $s_j$), and $\gamma \cdot (s_j - a_j)$ (upper-bounds $\dist(x_j, y_{f'})$
    and the remainder of waiting of $f'$).}
    \label{fig:triangle}
\end{figure}

\subsection{Bounding the LP value}

Here we will show an alternative (an in our opinion conceptually much simpler than the original one from Sec 5.2 of~\cite{JaMMSV03}) method to upper-bound a sequence of factor-revealing linear programs.

Our task now is to bound the value of $z_k(\gamma)$ for every $k$. It is
insufficient to compute $z_k(\gamma)$ for some fixed $k$ as it is monotonically
increasing with $k$ (a fact which we prove in a~slightly weaker form). In
this section, we will however show that it converges to a constant as $k$ goes to
infinity.
\begin{proposition}%
    \label{prop:z-increasing}
    For any $\gamma>0$ and $k, m\in \Nat_+$, it holds that $z_k(\gamma) \leq
    z_{k\cdot m}(\gamma)$.
\end{proposition}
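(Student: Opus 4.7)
The plan is to establish this by a replication argument: starting from an optimal solution to \lpone with $k$ clients, I will construct a feasible solution to \lpone with $k \cdot m$ clients that achieves the same objective value, which immediately yields $z_k(\gamma) \leq z_{k\cdot m}(\gamma)$.

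First, I would fix an optimal feasible solution $(d_i^*, a_i^*, s_i^*)_{i=0}^{k-1}$ together with $\fcg^*$ realizing $z_k(\gamma)$. Then, for each $j \in \{0, 1, \dots, km-1\}$, write $j = i \cdot m + r$ with $0 \leq i < k$ and $0 \leq r < m$, and define the replicated variables by $d_j := d_i^*$, $a_j := a_i^*$, $s_j := s_i^*$. Crucially, I would also scale the opening cost, setting the new facility opening cost to $m \cdot \fcg^*$; this is allowed because $\fcg$ is itself a decision variable of \lpone.

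Next I would verify the constraints of \lpone for this replicated instance. Constraint~\eqref{lb:time_order} holds because $(s_i^*)_i$ is non-decreasing and within each block of $m$ copies the values are equal. Constraint~\eqref{lb:non-negative} is inherited. For \eqref{lb:triangle_ineq}, when two indices $j' < j$ fall in the same block, the constraint collapses to $0 \leq 2 d_i^* \geq 0$; when they fall in different blocks, it reduces to the corresponding original inequality. The only constraint that actually uses the scaling of $\fcg$ is \eqref{lb:f_lowerbound}: for $\ell = i \cdot m + r$ we have $s_\ell = s_i^*$, and each original term $\max\{\gamma(s_i^* - a_{i'}^*) - d_{i'}^*, 0\}$ appears at most $m$ times in the new sum, so
\[
\sum_{j=\ell}^{km-1} \max\{\gamma(s_\ell - a_j) - d_j, 0\} \;\leq\; m \sum_{i'=i}^{k-1} \max\{\gamma(s_i^* - a_{i'}^*) - d_{i'}^*, 0\} \;\leq\; m \cdot \fcg^*,
\]
which matches the new opening cost.

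Finally, the objective ratio is invariant under this replication: the numerator becomes $(1+\gamma) \cdot m \sum_{i=0}^{k-1}(s_i^* - a_i^*)$ and the denominator becomes $m \cdot \fcg^* + m \sum_{i=0}^{k-1}(d_i^* + |a_i^*|)$, so the factor $m$ cancels and the ratio equals $z_k(\gamma)$. I do not foresee a genuine obstacle here; the only subtlety worth emphasizing is that the argument leans essentially on the fact that $\fcg$ is a free LP variable (otherwise constraint~\eqref{lb:f_lowerbound} could not absorb the $m$-fold blowup on its left-hand side).
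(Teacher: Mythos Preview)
Your proof is correct and follows essentially the same replication idea as the paper's proof. The only cosmetic difference is normalization: you replicate each client $m$ times and scale $\fcg$ by $m$, whereas the paper replicates each client $m$ times while dividing all of $d_i,a_i,s_i$ by $m$ and keeping $\fcg$ fixed; since the objective of \lpone is a homogeneous ratio, the two constructions are equivalent.
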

\begin{proof}
    Let $\tuple{d, a, s}\in \Real^k \times \Real^k \times \Real^k$ be a solution
    to~\lpone optimizing $z_k(\gamma)$. We will construct 
    a~solution $\tuple{d', a', s'}\in \Real^{k\cdot m} \times \Real^{k\cdot m}
    \times \Real^{k\cdot m}$ of the same value.

    For $i \in \upto{k}$ and $r \in \upto{m}$, let $d'_{m\cdot i + r} =
    d_i / m$, $a'_{m\cdot i + r} = a_i / m$, and $s'_{m\cdot i + r}
    = s_i / m$. The
    inequalities~\eqref{lb:time_order},~\eqref{lb:triangle_ineq},
    and~\eqref{lb:non-negative} are satisfied by the new solution, since the
    same numbers appear in these inequalities in the solution $\tuple{d, a, s}$.
     To show feasibility of $\tuple{d', a', s'}$ we need to argue
    that~\eqref{lb:f_lowerbound} is satisfied for $\ell$ not divisible by $m$:
    \begin{align*}
            \sum_{i = \ell}^{m\cdot k-1} &
                \max{\{\gamma \cdot (s'_\ell - a'_i) - d'_i, 0\}}  \\
            & \leq
            \sum_{i = m \cdot \fl{\ell / m}}^{m\cdot k-1}
                \max{\{\gamma \cdot (s'_\ell - a'_i) - d'_i, 0\}} 
            && \textit{(more summands)} \\
            &=
            \sum_{i = m \cdot \fl{\ell/m}}^{m\cdot k-1}
                \max{\{\gamma \cdot (s'_{m \cdot \fl{\ell/m}} - a'_i) - d'_i, 0\}} 
            && \textit{(since $s'_\ell = s'_{m \cdot \fl{\frac{\ell}{m}}}$)} \\
            &=
            \sum_{i = \fl{\ell/m}}^{k-1}
                \max{\{\gamma \cdot (s_{\fl{\ell/m}} - a_i) - d_i, 0\}} 
            && \textit{(by definition of $\tuple{d', a', s'}$)} \\
            &\leq \fcg.  &&
            \qedhere
    \end{align*}
\end{proof}
To determine how $z(\gamma)$ converges, we will define another LP, whose optimal
value will always upper-bound~\lpone.

\begin{linprog}\label{lp:2}\hypertarget{lptwo}{}
This program with optimal value equal $y_k(\gamma)$ is defined
to be the same as \lpone except for the following inequality swapped for \eqref{lb:f_lowerbound}:
\begin{subequations}%
    \label{eq:lp-upperbound}
    \begin{align}
        \sum_{i = \ell\mathbf{+1}}^{k-1} \max{\{\gamma \cdot (s_\ell - a_i) - d_i, 0\}} \leq \fcg
        &&\forall_{l < k}.%
        \tag{\ref{eq:lp-upperbound}d}
        \label{lb:f_upperbound}
    \end{align}
\end{subequations}
\end{linprog}%
\begin{proposition}%
    \label{prop:y-decreasing}
    For any $\gamma>0$ and $k, m\in \Nat_+$ it holds that $y_k(\gamma) \geq
    y_{k\cdot m}(\gamma)$.
\end{proposition}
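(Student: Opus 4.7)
The plan is to mirror the subdivision construction in the proof of \cref{prop:z-increasing}, but to run it in the reverse direction: starting from an optimal $(k\cdot m)$-sized solution $\tuple{d',a',s',f'}$ of \lptwo I will build a feasible $k$-sized solution whose objective value is at least $y_{km}(\gamma)$. The natural candidate collapses each block of $m$ consecutive indices into a single one by averaging,
\begin{equation*}
d_i = \frac{1}{m}\sum_{r=0}^{m-1} d'_{mi+r},\qquad
a_i = \frac{1}{m}\sum_{r=0}^{m-1} a'_{mi+r},\qquad
s_i = \frac{1}{m}\sum_{r=0}^{m-1} s'_{mi+r},
\end{equation*}
and sets the opening cost to $\fcg = f'/m$.

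The inequalities \eqref{lb:time_order} and \eqref{lb:non-negative} come for free from the corresponding coordinatewise properties of $\tuple{d',a',s'}$, and \eqref{lb:triangle_ineq} for a pair $(j,i)$ is obtained by summing the $m$ ``parallel'' triangle inequalities of the $km$-solution at the pairs $(mi+r,\,mj+r)$ for $r=0,\ldots,m-1$ and dividing by $m$. The delicate step, and what I expect to be the main obstacle, is the facility constraint \eqref{lb:f_upperbound}. Writing $\gamma(s_\ell-a_i)-d_i = \frac{1}{m}\sum_r[\gamma(s'_{m\ell+r}-a'_{mi+r})-d'_{mi+r}]$ and applying Jensen's inequality to $\max\{\cdot,0\}$ gives
\begin{equation*}
\sum_{i=\ell+1}^{k-1}\max\{\gamma(s_\ell-a_i)-d_i,0\}\leq \frac{1}{m}\sum_{r=0}^{m-1}\sum_{i=\ell+1}^{k-1}\max\{\gamma(s'_{m\ell+r}-a'_{mi+r})-d'_{mi+r},0\};
\end{equation*}
bounding each inner sum by a separate \lptwo facility constraint of the $km$-solution (one per residue class $r$) would yield only the bound $f'$, which is a factor $m$ too weak.

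The trick that closes the gap is to replace each $s'_{m\ell+r}$ inside the inner maxima by the uniformly larger value $s'_{m\ell+m-1}$, which is valid since $s'$ is non-decreasing and $\max\{\cdot,0\}$ is non-decreasing. The indices $mi+r$ with $i\geq\ell+1$ and $0\leq r<m$ then sweep out exactly $\{m\ell+m,\ldots,km-1\}$, so the whole double sum is upper-bounded by a \emph{single} facility constraint of the $km$-solution, namely the one for $\ell'=m\ell+m-1$, which by assumption is at most $f'$. Dividing by $m$ matches the chosen $\fcg = f'/m$ exactly. The objective check is then routine: the new numerator equals $\frac{1}{m}(1+\gamma)\sum_j(s'_j-a'_j)$, and the inequality $|a_i|\leq \frac{1}{m}\sum_r|a'_{mi+r}|$ shows that the new denominator is at most $\frac{1}{m}(f'+\sum_j(d'_j+|a'_j|))$, so the $1/m$ factors cancel in the ratio and the new solution attains value at least $y_{km}(\gamma)$.
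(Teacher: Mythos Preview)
Your proof is correct and follows essentially the same route as the paper: both collapse blocks of $m$ indices (the paper sums, you average---a harmless global rescaling since the objective is a homogeneous ratio and $\fcg$ scales with the rest), and both handle \eqref{lb:f_upperbound} by combining Jensen for $\max\{\cdot,0\}$ with the monotonicity step $s'_{m\ell+r}\le s'_{m(\ell+1)-1}$ so that a single facility constraint of the $km$-solution suffices. Your explicit use of $|a_i|\le\frac{1}{m}\sum_r|a'_{mi+r}|$ in the denominator is in fact a bit more careful than the paper's terse ``of the same value''.
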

\begin{proof}
    Let $\tuple{d, a, s}\in \Real^{k\cdot m} \times \Real^{k\cdot m} \times
    \Real^{k\cdot m}$ be the solution to~\lptwo maximizing
    $y_{k \cdot m}(\gamma)$. We define a solution $\tuple{d', a', s'}\in \Real^k
    \times \Real^k \times \Real^k$ of the same value as: 
    \begin{align*}
        d'_i = \sum_{r=0}^{m-1} d_{m\cdot i + r},
        &&
        a'_i = \sum_{r=0}^{m-1} a_{m\cdot i + r},
        &&
        s'_i = \sum_{r=0}^{m-1} s_{m\cdot i + r},
        &&\forall_{0 \leq i < k}. 
    \end{align*}
    Again, we only need to focus on the
    inequality~\eqref{lb:f_upperbound}:
    \begin{align*}
            \sum_{i = \ell+1}^{k-1} & \max\set*{\gamma \cdot (s'_\ell - a'_i) - d'_i, 0}  \\
            &=
            \sum_{i = \ell+1}^{k-1} \max\set*{
                \sum_{r=0}^{m-1}
                \gamma \cdot (s_{m\cdot \ell + r} - a_{m\cdot i + r}) - d_{m\cdot i + r}, 0} 
            && \textit{(def.~of $\tuple{d', a', s'}$)} \\
            &\leq
            \sum_{i = \ell+1}^{k-1} \max\set*{
                \sum_{r=0}^{m-1}
                \gamma \cdot (s_{m\cdot (\ell+1) - 1} - a_{m\cdot i+r}) - d_{m\cdot i+r}, 0} 
            && \textit{(monotonicity of $s_i$)} \\
            &\leq
            \sum_{i = \ell+1}^{k-1}
                \sum_{r=0}^{m-1}
                \max\set*{\gamma \cdot (s_{m\cdot (\ell+1) - 1} - a_{m\cdot i+r}) - d_{m\cdot i+r}, 0} 
            && \textit{(Jensen's inequality)} \\
            &=
            \sum_{i = m \cdot (\ell+1)}^{m\cdot k-1}
                \max\set*{\gamma \cdot (s_{m\cdot(\ell+1)-1} - a_i) - d_i, 0} \\
            &\leq \fcg . &&
            \qedhere
      \end{align*}
\end{proof}
Finally we show the relation between the corresponding $z$ and $y$ values.
\begin{proposition}%
    \label{prop:z-y-order}
    For any $\gamma>0$ and $k\in \Nat_+$ it holds that $y_k(\gamma) \geq
    z_k(\gamma)$.
\end{proposition}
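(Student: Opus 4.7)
The plan is to observe that \lpone and \lptwo share the same objective function~\eqref{lb:max} and the same constraints~\eqref{lb:time_order},~\eqref{lb:triangle_ineq}, and~\eqref{lb:non-negative}. The only difference is that the facility constraint~\eqref{lb:f_lowerbound} in \lpone involves a sum starting at $i=\ell$, while the corresponding constraint~\eqref{lb:f_upperbound} in \lptwo starts at $i=\ell+1$. Hence the proof reduces to showing that the feasible region of \lpone is contained in the feasible region of \lptwo; the two optima, being maxima of the same objective, then satisfy $y_k(\gamma) \geq z_k(\gamma)$.

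The key step is to note that the summands are of the form $\max\{\gamma(s_\ell - a_i) - d_i, 0\}$, which are nonnegative. Thus, for every $\ell < k$,
\[
\sum_{i=\ell+1}^{k-1} \max\{\gamma \cdot (s_\ell - a_i) - d_i, 0\}
\;\leq\;
\sum_{i=\ell}^{k-1} \max\{\gamma \cdot (s_\ell - a_i) - d_i, 0\},
\]
because the right-hand side adds one extra nonnegative term (the one for $i=\ell$). Therefore, if $\tuple{d,a,s}$ satisfies~\eqref{lb:f_lowerbound} with the right-hand side $\fcg$, it also satisfies~\eqref{lb:f_upperbound} with the same bound.

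I would conclude by taking any optimal solution $\tuple{d,a,s}$ of \lpone achieving value $z_k(\gamma)$; by the above observation together with the identity of the remaining constraints, this same tuple is feasible for \lptwo, and its objective value equals $z_k(\gamma)$. Since $y_k(\gamma)$ is the maximum over all feasible tuples of \lptwo, we obtain $y_k(\gamma) \geq z_k(\gamma)$.

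There is no real obstacle here; the argument is simply the standard observation that relaxing a constraint cannot decrease the optimal value of a maximization LP. The only thing to verify carefully is that the dropped term is indeed nonnegative, which is immediate from the $\max\{\cdot,0\}$ form.
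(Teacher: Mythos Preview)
Your proof is correct and follows essentially the same approach as the paper: both argue that any feasible solution of \lpone is feasible for \lptwo because constraint~\eqref{lb:f_upperbound} drops one nonnegative summand from the left-hand side of~\eqref{lb:f_lowerbound}, and hence the maximum of the common objective can only increase. Your write-up is slightly more explicit about the nonnegativity of the dropped term, but the argument is identical.
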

\begin{proof}
    Let $\tuple{d, w, s}\in \Real^{k} \times \Real^{k} \times \Real^{k}$ be the
    solution to~\lpone maximizing $z_{k}(\gamma)$. The same
    solution if feasible for~\lptwo, as
    ~\eqref{lb:f_upperbound} is weaker (has strictly fewer summands on
    the left side) than~\eqref{lb:f_lowerbound}.
\end{proof}
\noindent
Proposition~\ref{prop:z-y-order} completes the picture and lets us compare
$z(\gamma)$ with $y(\gamma)$ for any $k$, even with the weak monotonicity we
show in Propositions~\ref{prop:z-increasing} and~\ref{prop:y-decreasing}. 

\begin{corollary}\label{cor:inequalities}
For any $\gamma > 0$ and for any $k_1, k_2\in \Nat_+$, we have
    \[
        z_{k_1}(\gamma)
        \overset{\text{P\ref{prop:z-increasing}}}{\leq}
        z_{k_1 \cdot k_2}(\gamma)
        \overset{\text{P\ref{prop:z-y-order}}}{\leq}
        y_{k_1 \cdot k_2}(\gamma)
        \overset{\text{P\ref{prop:y-decreasing}}}{\leq}
        y_{k_2}(\gamma).
    \]
\end{corollary}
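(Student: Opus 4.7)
The plan is to observe that Corollary~\ref{cor:inequalities} is not really a new statement but rather a concatenation of the three propositions that have just been established. Each of the three weak inequalities in the chain has been labeled above with the proposition that should supply it, so the entire task reduces to verifying that the parameter substitutions match the hypotheses of the respective propositions.

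First I would invoke Proposition~\ref{prop:z-increasing} with the pair $(k_1, k_2)$ to obtain the left-most inequality $z_{k_1}(\gamma) \leq z_{k_1 \cdot k_2}(\gamma)$; this needs $k_1, k_2 \in \Nat_+$, which is exactly our hypothesis. Next I would apply Proposition~\ref{prop:z-y-order} at the single index $k = k_1 \cdot k_2 \in \Nat_+$, which yields the middle inequality $z_{k_1 \cdot k_2}(\gamma) \leq y_{k_1 \cdot k_2}(\gamma)$. Finally, I would apply Proposition~\ref{prop:y-decreasing} with the pair $(k_2, k_1)$ (taking care with the order of arguments, since Proposition~\ref{prop:y-decreasing} is stated as $y_k(\gamma) \geq y_{k \cdot m}(\gamma)$), giving the right-most inequality $y_{k_1 \cdot k_2}(\gamma) \leq y_{k_2}(\gamma)$. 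Chaining these three bounds produces the desired statement.

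There is essentially no technical obstacle here; the only thing to be careful about is that the composite index $k_1 \cdot k_2$ appears as the ``larger'' index on both sides of the chain, so the application of Proposition~\ref{prop:y-decreasing} swaps the roles of the two integers compared to Proposition~\ref{prop:z-increasing}. Since multiplication is commutative, this is harmless. The proof can therefore be written as a single display equation that simply re-prints the chain with the invoked proposition names placed above each inequality sign, exactly as already previewed in the statement of the corollary itself.
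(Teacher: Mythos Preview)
Your proposal is correct and matches the paper's approach exactly: the paper does not even supply a separate proof, since the chain of inequalities is already annotated with the three propositions it invokes, and your explicit parameter substitutions (including the swap $(k,m)=(k_2,k_1)$ for Proposition~\ref{prop:y-decreasing}) are precisely what is needed to justify each step.
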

We can now solve $y_k(\gamma)$ for some $k$ and obtain a bound on all
${\{z_\ell(\gamma)\}}_{\ell\in \Nat_+}$. The higher $k$ we choose, the
more precise bound we get in return.
\section{Computation of the competitive bounds}\label{sec:computational}


Having \Cref{cor:inequalities} ready to use, we now can lean on a
major strength of our proof strategy: We can now simply use a linear
programming solver to compute a valid bound the competitive ratio.

\begin{figure}
  \centering
    \includegraphics[width=\textwidth]{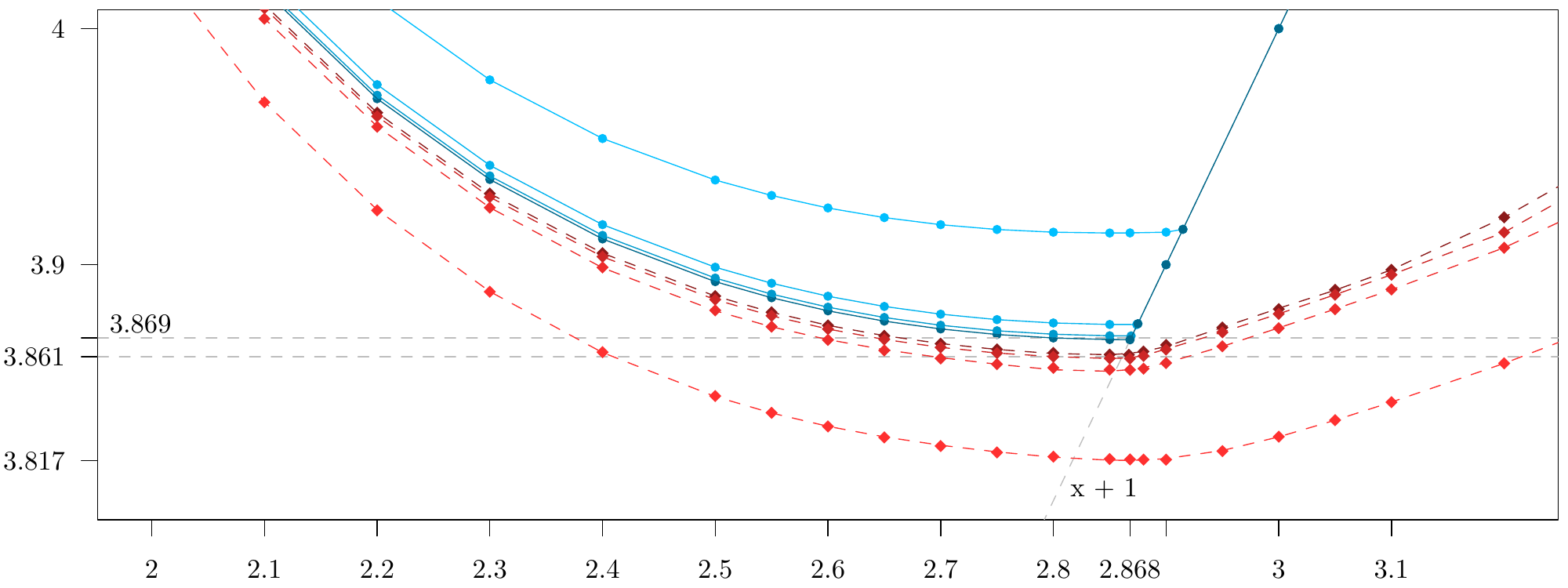}
    \caption{Visual representation of the computed bounds on the competitive ratio as functions of $\gamma$. Each data point corresponds to a single optimal solution of \protect\lpone or \protect\lptwo,
      the curves are interpolated. The blue (solid) curves correspond to the
      upper bound \protect\lptwo for $k=100$, $500$, $1000$ and $1500$ respectively. The
      red (dashed) curves represent the solutions of \protect\lpone for the same steps of $k$.
      We obtain that our algorithm is $3.869$-competitive for $\gamma=2.868$ (\Cref{clm:upperbound}),
      and it is not better than $3.861$-competitive (\Cref{clm:empirical}).
      Note that in line with results of \Cref{sec:factor-revealing}, for a fixed value
      of $\gamma$, the value of $z_k(\gamma)$ increases with increasing $k$ while $y_k(\gamma)$,
      its upper bound, decreases with $k$.}
    \label{fig:curve}
\end{figure}

Our computational results are summarized in \Cref{fig:curve}. We have
been able to solve LPs with up to 1500 clients, which allows us to
make the following two claims, one formally proved and one empirical:

\begin{claim}\label{clm:upperbound}
There exists an optimal solution for the linear program \lptwo
with parameters $k = 1500$ and $\gamma = 2.868$ of objective value $y_{1500}(2.868) \approx 3.869$.
\end{claim}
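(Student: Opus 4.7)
The plan is to reduce \lptwo at $k = 1500$, $\gamma = 2.868$ to a standard linear program (non-fractional, with no piecewise-linear terms), and then solve it with an off-the-shelf solver. Two nonlinear features of \lptwo must be removed first. The $\max\{\cdot,0\}$ inside constraint~\eqref{lb:f_upperbound} can be linearised in the standard way: for $0 \leq \ell < i < k$ I would introduce nonnegative auxiliary variables $u_{\ell,i}$, impose $u_{\ell,i} \geq \gamma \cdot (s_\ell - a_i) - d_i$, and rewrite~\eqref{lb:f_upperbound} as $\sum_{i=\ell+1}^{k-1} u_{\ell,i} \leq \mathrm{open}(f)$. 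Since tightening~\eqref{lb:f_upperbound} only gives more room to grow the numerator, any optimum attains $u_{\ell,i} = \max\{\gamma \cdot (s_\ell - a_i) - d_i,\,0\}$. Similarly, the $|a_i|$ appearing in the denominator of the objective is handled by auxiliary variables $b_i \geq 0$ with $b_i \geq a_i$ and $b_i \geq -a_i$; in a maximisation ratio whose denominator grows with $b_i$, the optimum forces $b_i = |a_i|$.

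After these two linearisations the problem has the fractional form $\max\,(c^{\top}x)/(d^{\top}x)$ subject to $Ax \leq 0$, $x \geq 0$, where $\mathrm{open}(f)$ is regarded as one of the variables. All remaining constraints are homogeneous, so the Charnes--Cooper transformation applies directly: substitute a single vector variable $y$ for $x$, impose the normalisation $d^{\top}y = 1$, and maximise the linear objective $c^{\top}y$ over $Ay \leq 0$, $y \geq 0$. The resulting LP has $\Theta(k^{2})$ variables and constraints; for $k = 1500$ this yields on the order of a few million nonzeros, well within reach of any modern LP solver. Running such a solver and reading off the optimal value produces the numerical bound $3.869$.

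The main obstacle will be certifying numerical accuracy. A floating-point solve does not automatically give a rigorous bound on $y_{1500}(\gamma)$, because tiny constraint violations can be amplified by the Charnes--Cooper rescaling. To make the claim fully rigorous I would either (a) resolve the LP with an exact rational-arithmetic engine such as QSopt\_ex, or SoPlex in iterative-refinement mode, and report the exact optimum; or (b) take the floating-point dual solution, round each coordinate to a nearby rational, verify dual feasibility symbolically, and invoke weak duality to obtain a certified upper bound on the primal optimum. Either route confirms the stated value $y_{1500}(2.868) \approx 3.869$ to the three decimal digits reported, and the implementation performing this computation is the one supplied in the accompanying software repository~\cite{datarepository}.
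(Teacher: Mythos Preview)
Your proposal is correct and matches the paper's approach: the paper's own proof simply points to a computed primal--dual certificate pair of equal objective value stored in the data repository, and your write-up spells out the standard linearisation and Charnes--Cooper normalisation needed to produce that LP, together with the duality-based certification. The only minor slip is the blanket ``$x \geq 0$'' in your Charnes--Cooper description (the variables $a_i$ and $s_i$ may be negative), but this does not affect the argument since all constraints are homogeneous and normalising the denominator to $1$ suffices.
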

\begin{proof}
We provide the proof in the form of feasible LP solution and a
feasible dual solution of the same objective value. The solutions are
available along with the rest of our data at~\cite{datarepository}.
\end{proof}

\begin{empiricalclaim}\label{clm:empirical}
The objective function $z_{1500}(\gamma)$ of \lpone, viewed as a function
of $\gamma$, is minimized for $\gamma \approx 2.867$ with objective
value $z_{1500}(2.867) \approx 3.861$.
\end{empiricalclaim}

With \Cref{clm:upperbound}, we can complete the proof of \Cref{thm:two-sided}:

\begin{proof}[Proof of \Cref{thm:two-sided}]
  We partition the cost of \OPT into costs associated with a single facility that
  \OPT opens, forming a star with the facility in the center. \Cref{lem:zk-star-bound} gives us
  that $z_k(\gamma)$ is an upper bound on the competitive ratio of \Cref{algo:two-sided} for a single star,
  and \Cref{cor:inequalities} implies that computing a single optimal solution of the upper bound \lptwo
  is sufficient for the bound on competitive ratio. Finally, \Cref{clm:upperbound} tells us that for
  $\gamma = 2.868$, the competitive ratio on a single star --- and thus, on all stars --- is, after rounding, at most $3.869$.
\end{proof}

The complementary \Cref{clm:empirical} provides a lower bound on the
efficiency of our method, as it suggests that \Cref{algo:two-sided} is
not better than $3.861$-competitive, and thus our analysis is almost
tight.

In \Cref{fig:curve}, we can see that all curves for LP2 begin to follow the line $y = x+1$
once they intersect it, which is not the case for the dashed curves of LP1. The following
observation explains the structure of feasible solutions once the $y = x+1$ line is crossed.

\begin{observation}\label{obs:quirk}
Consider the linear program \lptwo with parameters $k \ge 2$ and $\gamma$,
and let us set $\fcg + \sum_{i=0}^{k-1} (d_i + |a_i|) = 1$. Then, there is a feasible solution
to \lptwo with objective value $\gamma+1$.
\end{observation}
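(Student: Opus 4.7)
The approach I would take is to exhibit an explicit feasible solution of \lptwo whose objective equals $\gamma + 1$; existence is all that is needed. The idea driving the construction is to concentrate all ``waiting'' on a single client whose relative arrival time is negative, so that its contribution $|a|$ to the denominator balances exactly against its contribution $s - a$ to the numerator. Concretely, I would set $a_0 = -1$ and $s_0 = 0$, put $a_i = s_i = 0$ for every $i \geq 1$, and take $d_i = 0$ for all $i$ and $\fcg = 0$. The prescribed normalization $\fcg + \sum_i(d_i + |a_i|) = 1$ then holds automatically because only $|a_0| = 1$ contributes, and $\sum_i(s_i - a_i) = 1$, so the objective evaluates to $(1+\gamma) \cdot 1 = \gamma + 1$, as required.

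The remainder of the argument is a feasibility check, and most of it is immediate: time order~\eqref{lb:time_order} and the triangle inequality~\eqref{lb:triangle_ineq} reduce to trivial inequalities such as $0 \leq 0$ or $0 \leq \gamma$, and non-negativity~\eqref{lb:non-negative} holds since $s_i \geq a_i$ is either $0 \geq 0$ or $0 \geq -1$. The only step that requires any thought is the weakened facility constraint~\eqref{lb:f_upperbound}. The key observation is that in \lptwo the summation runs from $i = \ell + 1$, so client~$0$---the only client with a nonzero $s - a$ gap---never appears on the left-hand side for any $\ell$. For every remaining $i \geq 1$ we have $s_\ell - a_i = 0$ and $d_i = 0$, so each summand vanishes and the constraint is met with equality at the value $\fcg = 0$.

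There is no real obstacle here; the only conceptual move is deciding where to hide the waiting gap in the denominator. It is worth including a short remark on why the same solution fails in \lpone: at $\ell = 0$ the stronger constraint~\eqref{lb:f_lowerbound} picks up the extra summand $\max\{\gamma(s_0 - a_0) - d_0, 0\} = \gamma$, which would force $\fcg \geq \gamma$ and destroy the attack. This asymmetry is precisely what produces the $y = x + 1$ behaviour of the \lptwo curves visible in Figure~\ref{fig:curve} but not of the \lpone curves. The hypothesis $k \geq 2$ plays no essential role in the construction itself; it only ensures that \lpone and \lptwo are genuinely different programs, so that the observation is not vacuous.
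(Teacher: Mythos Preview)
Your proposal is correct and follows essentially the same construction as the paper: set $a_0=-1$, $s_0=0$, all other $a_i,s_i,d_i$ and $\fcg$ equal to zero, and observe that the weakened constraint~\eqref{lb:f_upperbound} never involves client~$0$ because the summation starts at $\ell+1$. Your added remark contrasting with \lpone (where the $\ell=0$ constraint~\eqref{lb:f_lowerbound} would pick up the term $\gamma$ and force $\fcg\geq\gamma$) matches the paper's explanation almost verbatim.
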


\begin{proof}
Recall that \lptwo is designed for the case the optimal solution opens
a facility at time $\tau$, which we can think of as $0$. For our feasible solution, we set all
distances to be identically zero and we release the first client at
time $a_0 = -1$, with $a_1 = a_2 = \cdots = a_{k-1} = 0$. The cost of
opening a facility is also zero. For the service times of the
algorithm, we set $s_0 = s_1 = \cdots = s_{k-1} = 0$.

We first double check that indeed, setting these values gives us $\fcg +
\sum_{i=0}^{k-1} (d_i + |a_i|) = 1$; we now proceed to check that the
solution is feasible. As all distances and service times are zero, all
constrains except the type \eqref{lb:f_upperbound} are trivially satisfied.

Let us restate the last set of constraints, \eqref{lb:f_upperbound}:
\begin{subequations}%
    \begin{align*}
        \sum_{i = \ell+1}^{k-1} \max{\{\gamma \cdot (s_\ell - a_i) - d_i, 0\}} \leq \fcg
        &&\forall_{\ell < k}.%
    \end{align*}
\end{subequations}

Observe that $a_0$ does not appear in any constraint of this type, and
so it is never checked that $\gamma (s_0 - a_0) \leq \fcg = 0$, which would
be false for any $s_0 > -1$. All constraints of this type only check variables
$a_1, \ldots, a_{k-1}$ and $s_1, \ldots, s_{k-1}$, which are all set to zero,
causing the constraints to be indeed satisfied.
\end{proof}

The quirk from \Cref{obs:quirk} does not cause any issues
for us, as the minimum of $z_k(\gamma)$, which serves as a lower bound
of the competitive ratio of \Cref{algo:two-sided}, is attained before
the curve $z_k(\gamma)$ intersects $y = x+1$ (see \Cref{fig:curve}).

For our computations, we use the LP solver Gurobi Optimizer version
9~\cite{gurobi}. The source code of our generator and selected
few solutions (that can be verified) can be found online
at~\cite{datarepository}. 

\section{From Two-Sided to One-Sided Delay}

We will now show how an online algorithm that solves the
two-sided variant of the facility location problem can be 
transformed into an algorithm for the one-sided variant. 
We require that the former algorithm 
is $(\sensa,\sensb)$-sensible for some constants $\sensa > 1$ and $\sensb > 1$ 
(cf.~\cref{def:sensibility}).
Recall that by \cref{lem:alg1_is_sensible}, 
\cref{algo:two-sided} is $(3/2, 2)$-sensible for $\gamma = 2$.

\subsection{Algorithm definition}

On the basis of an arbitrary online $(\sensa, \sensb)$-sensible algorithm \ALGTS
for the two-sided variant, we construct an online algorithm 
for the one-sided variant in the following way.

\begin{algorithm}
\label{algo:one-sided}
Our algorithm simulates the execution of \ALGTS on the input instance, and 
when \ALGTS opens a facility $f$ at point $y$ at time~$\tau$, and connects a subset
of clients to this facility, our algorithm does the same at time $\tau$.

From this point on, our algorithm tracks, in an online manner, all (late) connections
to facility $f$. Clients that are already connected by \ALGTS but not yet
connected by our algorithm are called \emph{pending for $f$}. At some times
(defined below), our algorithm opens a new facility at $y$ (called a \emph{copy} of $f$),
and connects all clients pending for $f$ to this copy.

\begin{itemize}
\item Let $\tau + b$ be the earliest time when $p \cdot b \geq \fc(y)$, 
where $p$ denotes the number of pending clients. 
(The inequality may be strict if it becomes true because of the increment of $p$.)
If there are no pending clients at time $\tau + \fc(y)$, we set $b = \fc(y)$. 

In either case, the first copy of $f$ is opened at time $\tau + b$. 

\item Let 
\[
    \tn =
    \frac{\sensa \cdot \sensb}{\sensa-1} \cdot \frac{\fc(y)}{b}\,,
\]
Note that $\tn > 1$ by the choice of $b$.
Let $q = \sqrt{\log \tilde{n}}$ and 
let $\ell$ be the smallest integer such that $q^\ell > \tn$. 
Subsequent facility copies are opened at times $\tau + b \cdot q^i$
for all integers $i \in \{1, \ldots, \ell\}$.
\end{itemize}
\end{algorithm}

In total, \cref{algo:one-sided} opens a facility at $y$ at time $\tau$, and 
$\ell+1$ of its copies at times $\tau + b \cdot q^i$, for $i \in \{0, \ldots, \ell\}$.

\subsection{Analysis}

    In the following, we assume that \ALGTS is $(\sensa,\sensb)$-sensible for some
fixed constants $\sensa > 1$ and $\sensb > 1$. We show that
\cref{algo:one-sided} is a valid algorithm (i.e., it eventually connects all
clients), and we upper-bound its total cost in comparison to the cost paid by
\ALGTS. 

We perform the cost comparison for each facility $f$ opened by \ALGTS; in the
following, we use $y$ to denote its location and $\tau$ to denote its opening
time in the solution of \ALGTS. We also use values of $b$, $\tn$, $q$ and~$\ell$
as computed by \cref{algo:one-sided} when handling clients connected to $f$ by
\ALGTS.

\paragraph*{Correctness}

We start with the following helper claim.

\begin{lemma}
\label{lem:tn_bound}
It holds that 
$\tn \leq n \cdot \sensa \cdot \sensb / (\sensa-1)$.
\end{lemma}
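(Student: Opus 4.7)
The plan is to bound $\fc(y)/b$ by $n$ directly from the definition of $b$ given in \cref{algo:one-sided}, which then yields the claim upon multiplying by the constant $\sensa \cdot \sensb/(\sensa-1)$. I would split into the two cases that appear in the definition of $b$.

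First, suppose there are no pending clients at time $\tau + \fc(y)$. Then by definition $b = \fc(y)$, so $\fc(y)/b = 1 \leq n$ (assuming $n \geq 1$, which we may take without loss of generality since otherwise the input is empty).

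Otherwise, $\tau + b$ is the earliest time at which $p \cdot b \geq \fc(y)$, where $p$ is the current number of pending clients. At this moment, $p \geq \fc(y)/b$. Since pending clients form a subset of all clients that ever arrive, $p \leq n$, so $\fc(y)/b \leq p \leq n$.

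In both cases $\fc(y)/b \leq n$, and multiplying by the positive constant $\sensa\sensb/(\sensa-1)$ gives the bound on $\tn$. The argument is essentially a bookkeeping step, so I do not foresee any real obstacle; the only subtlety is being careful about whether the trigger $p \cdot b \geq \fc(y)$ is met by growth of $b$ or by an increase in $p$, but in either case the inequality $p \geq \fc(y)/b$ holds at $\tau + b$, which is all that is needed.
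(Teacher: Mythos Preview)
Your proposal is correct and follows essentially the same approach as the paper: split on whether the fallback $b=\fc(y)$ is used, and otherwise read off $\fc(y)/b \leq p \leq n$ from the triggering condition $p \cdot b \geq \fc(y)$. The paper phrases the case split via the number $p'$ of clients that \ALGTS\ connected to $f$ in $(\tau,\tau+b]$, but this $p'$ coincides with your $p$ at the trigger time, so the arguments are the same.
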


\begin{proof}
Let $p'$ be the number of clients \ALGTS connected to $f$ within the interval
$(\tau, \tau+b]$. 
If $p' = 0$, then 
$b = \fc(y)$ and thus
$\tn = (\sensa \cdot \sensb) / (\sensa-1)$. The lemma follows 
trivially as $n \geq 1$. 
Otherwise, $p' > 0$, and then $p' \cdot b \geq \fc(y)$. In this case,
$\tn = (\sensa \cdot \sensb) \cdot (\sensa-1)^{-1} \cdot \fc(y) / b
\leq (\sensa \cdot \sensb) \cdot (\sensa-1)^{-1} \cdot p'$.
As $p' \leq n$, the lemma follows.
\end{proof}

\begin{lemma}
\label{lem:remaining_clients}
\cref{algo:one-sided} connects all clients.
\end{lemma}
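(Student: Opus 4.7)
The plan is to reduce correctness to a single observation: \cref{algo:one-sided} serves every client that \ALGTS connects to $f$ exactly at time $\tau$ immediately at time $\tau$, and it serves every client that \ALGTS connects to $f$ at some later time as soon as the next copy of $f$ is opened. Since the copies are scheduled at the increasing sequence $\tau + b,\ \tau + bq,\ \ldots,\ \tau + bq^\ell$, every \ALGTS-connection falling in $\interval[open left]{\tau}{\tau+bq^\ell}$ is automatically captured by one of these copies. Thus the lemma reduces to showing that \ALGTS makes no late connection to $f$ after time $\tau+bq^\ell$.

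To prove this, I would cover the tail $\interval[open left]{\tau+bq^\ell}{\infty}$ by the disjoint family of intervals
\[
    I_j \;=\; \interval[open left]{\tau + \sensa^{j}\cdot bq^\ell}{\tau + \sensa^{j+1}\cdot bq^\ell}, \qquad j = 0, 1, 2, \ldots,
\]
and apply the $(\sensa,\sensb)$-sensibility of \ALGTS to each $I_j$ with the choice $w = \sensa^{j}\cdot bq^\ell$. This bounds the number of clients that \ALGTS connects to $f$ inside $I_j$ by $\sensb\cdot \fcg / (\sensa^{j}\cdot bq^\ell)$. Summing the resulting geometric series over $j \geq 0$ shows that the total number of \ALGTS-connections to $f$ after time $\tau+bq^\ell$ is at most
\[
    \frac{\sensb\cdot \fcg}{bq^\ell}\cdot \frac{\sensa}{\sensa-1} \;=\; \frac{\tn}{q^\ell},
\]
where the last equality just unfolds the definition $\tn = \sensa\sensb\cdot \fcg/((\sensa-1)b)$ from \cref{algo:one-sided}.

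The conclusion then follows because $\ell$ is, by construction, the smallest integer with $q^\ell > \tn$, so the bound above is strictly less than $1$; being a nonnegative integer, the number of late \ALGTS-connections after time $\tau+bq^\ell$ must in fact be zero. The main thing to get right is the alignment between the geometric partition and the parameter $w$ fed into sensibility, so that the geometric series collapses exactly into the expression defining $\tn$; once that alignment is fixed, the defining inequality for $\ell$ closes the argument immediately, and no case analysis on client arrival times or on the internal behaviour of \ALGTS is required beyond what sensibility already provides.
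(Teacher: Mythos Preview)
Your proposal is correct and follows essentially the same argument as the paper: both reduce to showing that \ALGTS makes no late connection to $f$ after time $\tau+bq^\ell$, cover $\interval[open left]{\tau+bq^\ell}{\infty}$ by the geometric family $\interval[open left]{\tau+\sensa^{j}bq^\ell}{\tau+\sensa^{j+1}bq^\ell}$, apply $(\sensa,\sensb)$-sensibility on each piece, and sum the geometric series to obtain the bound $\tn/q^\ell<1$. The only cosmetic difference is that the paper first derives the bound for a generic threshold $t$ and then specializes to $t=bq^\ell$, whereas you specialize immediately.
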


\begin{proof}
The last copy of the facility $f$ is opened by \cref{algo:one-sided} 
at time $\tau + b \cdot q^\ell$. 
Thus, it suffices to show that \ALGTS connects no clients 
to $f$ after this time.

Fix any $t > 0$, any integer $i \geq 0$, and consider time interval 
\[ 
    I^t_i = \Big(\tau + \sensa^i \cdot t, \; \tau + \sensa^{i+1} \cdot t\Big].
\]
As \ALGTS is $(\sensa,\sensb)$-sensible, it connects at most 
$\sensb \cdot \fc(y) / (\sensa^i \cdot t)$ clients within interval $I^t_i$.

Note that $\biguplus_{i=0}^\infty I^t_i = (\tau+\sensa^0 \cdot t,\infty) = (\tau+t, \infty)$.
Thus, for an arbitrary $t$, by summing over all $i \geq 0$, the number of clients 
connected after time $\tau + t$ is at most 
\[
    \sum_{i=0}^\infty \frac{\sensb \cdot \fc(y)}{\sensa^{i} \cdot t} = 
    \frac{\sensa \cdot \sensb}{\sensa-1} \cdot \frac{\fc(y)}{t} 
    = \frac{\tn \cdot b}{t} \,.
\]
Hence, 
the number of clients connected after time $\tau + b \cdot q^\ell$ is at most 
$\tn \cdot b / (b \cdot q^\ell) = \tn / q^\ell < 1$. 
As the number of clients is integral, it must be zero.
\end{proof}


\paragraph*{Bounding the waiting time}

Now we focus on bounding the total waiting time of \cref{algo:one-sided}.
Let $C$ be the set of clients connected by \ALGTS to the facility $f$ at $y$. 
We split $C$ into 
four disjoint parts: the clients connected by \ALGTS at time $\tau$, within interval $(\tau,\tau+b)$, at time $\tau+b$, and after $\tau+b$. We denote these parts 
$C_{= \tau}$, $C_{(\tau,\tau+b)}$, 
$C_{=\tau+b}$ and $C_{> \tau+b}$, respectively.
For any client $j \in C$, let $\waitTS_j$ and $\wait_j$ 
denote its waiting cost in the solutions of \ALGTS and \cref{algo:one-sided}, respectively.

\begin{lemma}
\label{lem:waiting_bound_1}
For any client $j \in C \setminus C_{(\tau,\tau+b)}$, it holds that 
$\wait_j \leq q \cdot \waitTS_j$.
\end{lemma}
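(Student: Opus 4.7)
My plan is to split $C \setminus C_{(\tau,\tau+b)}$ as $C_{=\tau} \cup C_{=\tau+b} \cup C_{>\tau+b}$ and handle each subset in turn. Throughout, $\wait_j$ denotes Algorithm 2's client-side waiting (the only form of waiting in the one-sided model), while $\waitTS_j$ denotes the full two-sided waiting that ATS pays for $j$, i.e., the sum of the client-side cost $t^{c,\mathrm{ATS}}_j - t_j$ and the facility-side cost $t^{c,\mathrm{ATS}}_j - \tau$.

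For $j \in C_{=\tau}$, by construction Algorithm 2 connects $j$ to its own copy of $f$ at time $\tau$, so $\wait_j = \waitTS_j$. For $j \in C_{=\tau+b}$, ATS connects $j$ at time $\tau+b$ while Algorithm 2 does so at the first copy, also opened at $\tau+b$; the client-side waitings agree and ATS additionally pays facility-side waiting $b$, giving $\wait_j \leq \waitTS_j$. In both cases the desired inequality $\wait_j \leq q \cdot \waitTS_j$ follows immediately from $q > 1$.

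The substantive case will be $j \in C_{>\tau+b}$. Writing $t := t^{c,\mathrm{ATS}}_j - \tau > b$, Lemma \ref{lem:remaining_clients} implies $t \leq bq^\ell$, so there is a unique index $i \in \{1,\ldots,\ell\}$ with $bq^{i-1} < t \leq bq^i$. Algorithm 2 then connects $j$ at the $i$-th copy, opened at time $\tau + bq^i$, and since $bq^i \leq qt$ one obtains $\wait_j = \tau + bq^i - t_j \leq \tau + qt - t_j$. As $j$ is a late connection for ATS, we have $\waitTS_j = \tau + 2t - t_j$, and a direct expansion together with the trivial bound $t_j \leq \tau + t$ then yields
\[
    q \cdot \waitTS_j - \wait_j \;\geq\; (q-1)(\tau - t_j) + qt \;\geq\; -(q-1)t + qt \;=\; t \;>\; 0.
\]

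The main obstacle I anticipate is the sub-case of $C_{>\tau+b}$ in which $j$ arrives after the ATS facility opening (so $t_j > \tau$), as then the term $(q-1)(\tau - t_j)$ above becomes negative. The saving grace is that the facility-side waiting $t$ baked into $\waitTS_j$ is large enough --- precisely because ATS pays waiting on both sides of each late connection --- to compensate for this deficit, and this is ultimately why the right-hand side of the bound has to use the full two-sided waiting cost rather than just its client-side component.
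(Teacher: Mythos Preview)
Your proof is correct and follows essentially the same case split and reasoning as the paper's own argument. The only notable difference is in the case $j \in C_{=\tau+b}$: the paper asserts $t_j = \tau+b$ and hence $\wait_j = 0$, whereas you give the more direct (and more robust) observation that Algorithm~2 connects $j$ at $\tau+b$ as well, so $\wait_j$ equals the client-side part of $\waitTS_j$ and is therefore at most $\waitTS_j$; in the case $j \in C_{>\tau+b}$ your explicit expansion is algebraically equivalent to the paper's ``subtract $\waitinit_j$ then multiply by $q$'' step.
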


\begin{proof}
For any client $j$, let $t_j$ be the time of its arrival 
and $t^c_j$ the time \ALGTS connects it to facility $f$. 

In the case $j \in C_{= \tau}$, in both solutions of \ALGTS and \cref{algo:one-sided}, client $j$ waits till~$\tau$,
and thus $\wait_j = \waitTS_j$. The case $j \in C_{=\tau+b}$ is possible only if $t_j = \tau+b$, and then
$\wait_j = 0 \leq q \cdot \waitTS_j$.

It remains to consider the case $j \in C_{> \tau+b}$, i.e., $t^c_j > \tau + b$.
Let $\waitinit_j
= t^c_j - t_j$; this amount represents the inevitable waiting time
that $j$ incurs in both solutions. 
Note that $\waitTS_j - \waitinit_j = t^c_j - \tau$ 
corresponds to the facility-side waiting cost of $j$ in the solution of~\ALGTS.
By \cref{lem:remaining_clients}, 
$t^c_j \leq \tau + b \cdot q^\ell$. Thus, there exists an~integer $i \in \{1, \ldots, \ell\}$, 
such that $\tau+b \cdot q^{i-1} < t^c_j \leq \tau + b \cdot q^{i}$.
\cref{algo:one-sided} connects $j$ at time $\tau + b \cdot q^i$, and therefore
\[
    \wait_j - \waitinit_j = \tau + b \cdot q^i - t^c_j \leq b \cdot q^i = q \cdot \left(b \cdot q^{i-1} \right) < q \cdot (t^c_j - \tau) = q \cdot (\waitTS_j - \waitinit)\,.
\]
The proof is concluded by adding $\waitinit_j$ to both sides.
\end{proof}

\begin{lemma}
\label{lem:waiting_bound_2}
It holds that 
$\sum_{c \in C} \wait(c) \leq \fc(y) + q \cdot \sum_{c \in C} \waitTS(c)$.
\end{lemma}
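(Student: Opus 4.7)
The plan is to split the sum over $C$ into two parts, the clients in $C_{(\tau,\tau+b)}$ and those in its complement $C \setminus C_{(\tau,\tau+b)}$, and handle them by essentially different arguments. For the complement, \cref{lem:waiting_bound_1} is already tailored to the task and immediately gives $\sum_{c \in C \setminus C_{(\tau,\tau+b)}} \wait_c \leq q \cdot \sum_{c \in C \setminus C_{(\tau,\tau+b)}} \waitTS_c$. The interesting case is $C_{(\tau,\tau+b)}$, for which \cref{lem:waiting_bound_1} cannot apply: \cref{algo:one-sided} delays these clients until time $\tau + b$, while \ALGTS has already connected them inside $(\tau,\tau+b)$, so their waiting costs can be genuinely larger than those of \ALGTS.

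For a client $j \in C_{(\tau,\tau+b)}$, I would write $\wait_j = (\tau+b) - t_j = \bigl[(\tau+b) - t^c_j\bigr] + \bigl[t^c_j - t_j\bigr]$, bound the first bracket by $b$ using $t^c_j > \tau$, and recognize the second as the client-side wait in \ALGTS, which is at most $\waitTS_j$. Summing,
\[
    \sum_{j \in C_{(\tau,\tau+b)}} \wait_j \;\leq\; |C_{(\tau,\tau+b)}| \cdot b \;+\; \sum_{j \in C_{(\tau,\tau+b)}} \waitTS_j.
\]

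The main step is then to show $|C_{(\tau,\tau+b)}| \cdot b \leq \fc(y)$, which is where the definition of $b$ is put to use. At every time $\tau + t$ with $t < b$, by the minimality of $b$ we have $p \cdot t < \fc(y)$, where $p$ is the current number of pending clients. Taking $t \nearrow b$, the pending count approaches exactly $|C_{(\tau,\tau+b)}|$ (clients \ALGTS connected strictly inside $(\tau,\tau+b)$; any clients connected at $\tau+b$ itself belong to $C_{=\tau+b}$ and are counted only when $p$ jumps at $\tau+b$). This yields $|C_{(\tau,\tau+b)}| \cdot b \leq \fc(y)$. The degenerate case $b = \fc(y)$ with no pending clients also satisfies the bound trivially since then $|C_{(\tau,\tau+b)}| = 0$.

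Combining the two partial bounds and using $q \geq 1$ (which holds because $\tn > 1$ and therefore $q = \sqrt{\log \tn} \geq 1$ once $\tn$ is large enough; more precisely, one rolls the coefficient $1$ into the factor $q$ on the complement part) gives
\[
    \sum_{c \in C} \wait_c \;\leq\; \fc(y) \;+\; \sum_{c \in C_{(\tau,\tau+b)}} \waitTS_c \;+\; q \sum_{c \in C \setminus C_{(\tau,\tau+b)}} \waitTS_c \;\leq\; \fc(y) + q \cdot \sum_{c \in C} \waitTS_c,
\]
as required. The hardest point is pinning down the bound $|C_{(\tau,\tau+b)}| \cdot b \leq \fc(y)$, since the defining inequality $p \cdot b \geq \fc(y)$ mixes a continuous quantity $b$ with a pending count $p$ that can jump discretely, but a careful limit argument resolves both regimes uniformly.
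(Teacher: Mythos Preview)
Your proof is correct and follows essentially the same approach as the paper: both split $C$ into $C_{(\tau,\tau+b)}$ and its complement, apply \cref{lem:waiting_bound_1} on the complement, and on $C_{(\tau,\tau+b)}$ decompose $\wait_j$ into $(\tau+b-t^c_j)+(t^c_j-t_j)$ to obtain $\sum \wait_j \le |C_{(\tau,\tau+b)}|\cdot b + \sum \waitTS_j$ and then use the definition of $b$ to bound $|C_{(\tau,\tau+b)}|\cdot b$ by $\fc(y)$. Your limit argument for that last step is in fact more careful than the paper's one-line appeal to the definition of $b$; the only wobble is your justification of $q\ge 1$, which the paper also leaves implicit.
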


\begin{proof}
We use the same notions of $t_j$, $t^c_j$, $\wait_j$, $\waitTS_j$ and $\waitinit_j$ as 
in the previous proof.

Fix any client $j \in C_{(\tau, \tau+b)}$. Clearly, $t^c_j > \tau$. 
Furthermore, $j$ is served by 
\cref{algo:one-sided} at time $\tau+b$, and thus
\[
    \sum_{j \in C_{(\tau,\tau+b)}} \left(\wait_j - \waitinit_j\right) 
    = \sum_{j \in C_{(\tau,\tau+b)}} \left(\tau+b - t_j^c \right) 
    \leq |C_{(\tau,\tau+b)}| \cdot b < \fc(y) \,.
\]
The last inequality follows by the definition of $b$ in 
\cref{algo:one-sided}. By adding $\sum_{c \in C_{(\tau,\tau+b)}} \waitinit_j$ to both sides, we obtain 
\[
    \sum_{j \in C_{(\tau,\tau+b)}} \wait_j
    \leq \fc(y) + \sum_{j \in C_{(\tau,\tau+b)}} \waitinit_j 
    \leq \fc(y) + \sum_{j \in C_{(\tau,\tau+b)}} \waitTS_j \,.
\]
By combining the inequality above with \cref{lem:waiting_bound_1} applied to all 
$C \setminus C_{(\tau,\tau+b)}$, we obtain the lemma statement.
\end{proof}%

\paragraph*{Competitive ratio}

Finally, we use our bounds to prove \cref{thm:one-sided}, i.e., show that 
the competitive ratio of \cref{algo:one-sided}.

\begin{proof}[Proof of \cref{thm:one-sided}]
We fix any $(\sensa,\sensb)$-sensible $O(1)$-competitive algorithm \ALGTS for
the two-sided variant. By \cref{lem:alg1_is_sensible}, such algorithm exists for
$\sensa = 3/2$ and $\sensb = 2$.

We fix any facility opened by \ALGTS at location $y$; let $C$ denote the set of
clients that are connected to this facility in the solution of \ALGTS. Below we
show that the total cost pertaining to clients from~$C$ in the solution of
\cref{algo:one-sided} is at most $O(\log n / \log \log n)$ times larger than the
cost pertaining to these clients in the solution of \ALGTS.

The theorem will then follow by summing this relation over all facilities opened by \ALGTS,
and observing that the value of \OPT for the one-sided variant can be only more expensive 
than \OPT for the
two-sided variant (as the two-sided variant is a relaxation of the one-sided
variant).

We relate parts of cost of solution produced by \cref{algo:one-sided} 
to the corresponding costs of \ALGTS.
\begin{itemize}
\item 
The cost of connecting clients from $C$ by \cref{algo:one-sided} is trivially equal to the connection
cost in the solution of \ALGTS. 

\item 
To bound the waiting cost of clients from $C$, we apply 
\cref{lem:waiting_bound_2} obtaining that 
$\sum_{c \in C} \wait(c) \leq \fc(y) + q \cdot \sum_{c \in C} \waitTS(c)$.
As $q = \sqrt{\log \tn} = O(\sqrt{\log n}) = O(\log n / \log \log n)$, 
the total waiting cost of \cref{algo:one-sided} is at most 
$\fc(y) + O(\log n / \log \log n) \cdot \sum_{c \in C} \waitTS(c)$.

\item \cref{algo:one-sided} opens a facility at location $y$ at time $\tau$ and then 
$\ell+1$ of its copies at times $\tau + b \cdot q^i$ for $i \in \{0, \ldots, \ell\}$.
Thus, its overall opening cost is $(\ell+2) \cdot \fc(y)$. 
Recall that $\ell = \ceil{\log \tn / \log q} = O(\log \tn / \log \log \tn)$.
By \cref{lem:tn_bound}, the latter amount is $O(\log n / \log \log n)$.
Thus, the opening cost of \cref{algo:one-sided} is $O(\log n / \log \log n) \cdot \fc(y)$
while that of \ALGTS is $\fc(y)$.
\end{itemize}
The proof follows by adding guarantees of all the cases above.
\end{proof}

\bibliography{references}

\begin{thebibliography}{10}

\bibitem{AaByMa16}
Karen Aardal, Jaroslaw Byrka, and Mohammad Mahdian.
\newblock Facility location.
\newblock In {\em Encyclopedia of Algorithms}, pages 717--724. Springer, 2016.
\newblock \href {https://doi.org/10.1007/978-1-4939-2864-4\_139}
  {\path{doi:10.1007/978-1-4939-2864-4\_139}}.

\bibitem{AnBeUH04}
Aris Anagnostopoulos, Russell Bent, Eli Upfal, and Pascal~Van Hentenryck.
\newblock A simple and deterministic competitive algorithm for online facility
  location.
\newblock {\em Information and Computation}, 194(2):175--202, 2004.
\newblock \href {https://doi.org/10.1016/j.ic.2004.06.002}
  {\path{doi:10.1016/j.ic.2004.06.002}}.

\bibitem{AACCGK17}
Itai Ashlagi, Yossi Azar, Moses Charikar, Ashish Chiplunkar, Ofir Geri, Haim
  Kaplan, Rahul~M. Makhijani, Yuyi Wang, and Roger Wattenhofer.
\newblock Min-cost bipartite perfect matching with delays.
\newblock In {\em Proc. 20th Approximation, Randomization, and Combinatorial
  Optimization. Algorithms and Techniques (APPROX/RANDOM)}, pages 1:1--1:20,
  2017.

\bibitem{AzChKa17}
Yossi Azar, Ashish Chiplunkar, and Haim Kaplan.
\newblock Polylogarithmic bounds on the competitiveness of min-cost perfect
  matching with delays.
\newblock In {\em Proc. 28th ACM-SIAM Symp. on Discrete Algorithms (SODA)},
  pages 1051--1061, 2017.

\bibitem{AzChKT20}
Yossi Azar, Ashish Chiplunkar, Shay Kutten, and Noam Touitou.
\newblock Set cover with delay --- clairvoyance is not required.
\newblock In {\em Proc. 28th European Symp. on Algorithms (ESA)}, pages
  8:1--8:21, 2020.
\newblock \href {https://doi.org/10.4230/LIPIcs.ESA.2020.8}
  {\path{doi:10.4230/LIPIcs.ESA.2020.8}}.

\bibitem{AzaFan20}
Yossi Azar and Amit~Jacob Fanani.
\newblock Deterministic min-cost matching with delays.
\newblock {\em Theory of Computing Systems}, 64(4):572--592, 2020.
\newblock \href {https://doi.org/10.1007/s00224-019-09963-7}
  {\path{doi:10.1007/s00224-019-09963-7}}.

\bibitem{AzGaGP17}
Yossi Azar, Arun Ganesh, Rong Ge, and Debmalya Panigrahi.
\newblock Online service with delay.
\newblock In {\em Proc. 49th ACM Symp. on Theory of Computing (STOC)}, pages
  551--563, 2017.

\bibitem{AzReVa21}
Yossi Azar, Runtian Ren, and Danny Vainstein.
\newblock The min-cost matching with concave delays problem.
\newblock In {\em Proc. 2021 ACM-SIAM Symp. on Discrete Algorithms (SODA)},
  pages 301--320, 2021.
\newblock \href {https://doi.org/10.1137/1.9781611976465.20}
  {\path{doi:10.1137/1.9781611976465.20}}.

\bibitem{AzaTou19}
Yossi Azar and Noam Touitou.
\newblock General framework for metric optimization problems with delay or with
  deadlines.
\newblock In {\em Proc. 60th IEEE Symp. on Foundations of Computer Science
  (FOCS)}, pages 60--71, 2019.
\newblock \href {https://doi.org/10.1109/FOCS.2019.00013}
  {\path{doi:10.1109/FOCS.2019.00013}}.

\bibitem{AzaTou20}
Yossi Azar and Noam Touitou.
\newblock Beyond tree embeddings - a deterministic framework for network design
  with deadlines or delay.
\newblock In {\em Proc. 61st IEEE Symp. on Foundations of Computer Science
  (FOCS)}, pages 1368--1379, 2020.
\newblock \href {https://doi.org/10.1109/FOCS46700.2020.00129}
  {\path{doi:10.1109/FOCS46700.2020.00129}}.

\bibitem{BBBCDF20}
Marcin Bienkowski, Martin B{\"{o}}hm, Jaroslaw Byrka, Marek Chrobak, Christoph
  D{\"{u}}rr, Luk\'a\v{s} Folwarczn\'y, Lukasz Jez, Jir{\'{\i}} Sgall,
  Nguyen~Kim Thang, and Pavel Vesel{\'{y}}.
\newblock Online algorithms for multilevel aggregation.
\newblock {\em Operations Research}, 68(1):214--232, 2020.
\newblock \href {https://doi.org/10.1287/opre.2019.1847}
  {\path{doi:10.1287/opre.2019.1847}}.

\bibitem{BBBCDF21}
Marcin Bienkowski, Martin B{\"{o}}hm, Jaroslaw Byrka, Marek Chrobak, Christoph
  D{\"{u}}rr, Luk\'a\v{s} Folwarczn\'y, Lukasz Jez, Jir{\'{\i}} Sgall,
  Nguyen~Kim Thang, and Pavel Vesel{\'{y}}.
\newblock New results on multi-level aggregation.
\newblock {\em Theoretical Computer Science}, 861:133--143, 2021.
\newblock \href {https://doi.org/10.1016/j.tcs.2021.02.016}
  {\path{doi:10.1016/j.tcs.2021.02.016}}.

\bibitem{datarepository}
Marcin Bienkowski, Martin B{\"o}hm, Jaros{\l}aw Byrka, and Jan Marcinkowski.
\newblock Data and computations for online facility location with linear delay.
\newblock \url{https://github.com/bohm/fl-double-sided-waiting}, 2021.

\bibitem{BBCJNS14}
Marcin Bienkowski, Jaroslaw Byrka, Marek Chrobak, {\L}ukasz Je{\.z}, Dorian
  Nogneng, and Jir{\'{\i}} Sgall.
\newblock Better approximation bounds for the joint replenishment problem.
\newblock In {\em Proc. 25th ACM-SIAM Symp. on Discrete Algorithms (SODA)},
  pages 42--54, 2014.

\bibitem{BBCJSS13}
Marcin Bienkowski, Jaroslaw Byrka, Marek Chrobak, {\L}ukasz Je{\.z}, Ji\v{r}i
  Sgall, and Grzegorz Stachowiak.
\newblock Online control message aggregation in chain networks.
\newblock In {\em Proc. 13th Int. Workshop on Algorithms and Data Structures
  (WADS)}, pages 133--145, 2013.

\bibitem{BiKrLS18}
Marcin Bienkowski, Artur Kraska, Hsiang{-}Hsuan Liu, and Pawel Schmidt.
\newblock A primal-dual online deterministic algorithm for matching with
  delays.
\newblock In {\em Proc. 16th Workshop on Approximation and Online Algorithms
  (WAOA)}, pages 51--68, 2018.
\newblock \href {https://doi.org/10.1007/978-3-030-04693-4\_4}
  {\path{doi:10.1007/978-3-030-04693-4\_4}}.

\bibitem{BiKrSc17}
Marcin Bienkowski, Artur Kraska, and Pawe{\l} Schmidt.
\newblock A match in time saves nine: Deterministic online matching with
  delays.
\newblock In {\em Proc. 15th Workshop on Approximation and Online Algorithms
  (WAOA)}, pages 132--146, 2017.

\bibitem{BiKrSc18}
Marcin Bienkowski, Artur Kraska, and Pawe{\l} Schmidt.
\newblock Online service with delay on a line.
\newblock In {\em Proc. 25th Int. Colloq. on Structural Information and
  Communication Complexity (SIROCCO)}, pages 237--248, 2018.
\newblock \href {https://doi.org/10.1007/978-3-030-01325-7\_22}
  {\path{doi:10.1007/978-3-030-01325-7\_22}}.

\bibitem{BuFeNT17}
Niv Buchbinder, Moran Feldman, Joseph~(Seffi) Naor, and Ohad Talmon.
\newblock \emph{O}(depth)-competitive algorithm for online multi-level
  aggregation.
\newblock In {\em Proc. 28th ACM-SIAM Symp. on Discrete Algorithms (SODA)},
  pages 1235--1244, 2017.

\bibitem{BuKLMS13}
Niv Buchbinder, Tracy Kimbrel, Retsef Levi, Konstantin Makarychev, and Maxim
  Sviridenko.
\newblock Online make-to-order joint replenishment model: Primal-dual
  competitive algorithms.
\newblock {\em Operations Research}, 61(4):1014--1029, 2013.
\newblock \href {https://doi.org/10.1287/opre.2013.1188}
  {\path{doi:10.1287/opre.2013.1188}}.

\bibitem{ByrAar10}
Jaroslaw Byrka and Karen Aardal.
\newblock An optimal bifactor approximation algorithm for the metric
  uncapacitated facility location problem.
\newblock {\em SIAM Journal on Computing}, 39(6):2212--2231, 2010.
\newblock \href {https://doi.org/10.1137/070708901}
  {\path{doi:10.1137/070708901}}.

\bibitem{CaPrSV18}
Rodrigo~A. Carrasco, Kirk Pruhs, Cliff Stein, and Jos{\'{e}} Verschae.
\newblock The online set aggregation problem.
\newblock In {\em Proc. 13th Latin American Theoretical Informatics Symposium
  (LATIN)}, pages 245--259, 2018.
\newblock \href {https://doi.org/10.1007/978-3-319-77404-6\_19}
  {\path{doi:10.1007/978-3-319-77404-6\_19}}.

\bibitem{ChaGuh05}
Moses Charikar and Sudipto Guha.
\newblock Improved combinatorial algorithms for facility location problems.
\newblock {\em SIAM Journal on Computing}, 34(4):803--824, 2005.
\newblock \href {https://doi.org/10.1137/S0097539701398594}
  {\path{doi:10.1137/S0097539701398594}}.

\bibitem{Chroba14}
Marek Chrobak.
\newblock Online aggregation problems.
\newblock {\em SIGACT News}, 45(1):91--102, 2014.

\bibitem{ChuShm03}
Fabi{\'{a}}n~A. Chudak and David~B. Shmoys.
\newblock Improved approximation algorithms for the uncapacitated facility
  location problem.
\newblock {\em SIAM Journal on Computing}, 33(1):1--25, 2003.
\newblock \href {https://doi.org/10.1137/S0097539703405754}
  {\path{doi:10.1137/S0097539703405754}}.

\bibitem{DoGoSc01}
Daniel~R. Dooly, Sally~A. Goldman, and Stephen~D. Scott.
\newblock On-line analysis of the {TCP} acknowledgment delay problem.
\newblock {\em Journal of the ACM}, 48(2):243--273, 2001.

\bibitem{EmKuWa16}
Yuval Emek, Shay Kutten, and Roger Wattenhofer.
\newblock Online matching: haste makes waste!
\newblock In {\em Proc. 48th ACM Symp. on Theory of Computing (STOC)}, pages
  333--344, 2016.

\bibitem{EmShWa19}
Yuval Emek, Yaacov Shapiro, and Yuyi Wang.
\newblock Minimum cost perfect matching with delays for two sources.
\newblock {\em Theoretical Computer Science}, 754:122--129, 2019.
\newblock \href {https://doi.org/10.1016/j.tcs.2018.07.004}
  {\path{doi:10.1016/j.tcs.2018.07.004}}.

\bibitem{Fotaki07}
Dimitris Fotakis.
\newblock A primal-dual algorithm for online non-uniform facility location.
\newblock {\em Journal of Discrete Algorithms}, 5(1):141--148, 2007.
\newblock \href {https://doi.org/10.1016/j.jda.2006.03.001}
  {\path{doi:10.1016/j.jda.2006.03.001}}.

\bibitem{Fotaki08}
Dimitris Fotakis.
\newblock On the competitive ratio for online facility location.
\newblock {\em Algorithmica}, 50(1):1--57, 2008.
\newblock \href {https://doi.org/10.1007/s00453-007-9049-y}
  {\path{doi:10.1007/s00453-007-9049-y}}.

\bibitem{GuhKhu99}
Sudipto Guha and Samir Khuller.
\newblock Greedy strikes back: Improved facility location algorithms.
\newblock {\em Journal of Algorithms}, 31(1):228--248, 1999.
\newblock \href {https://doi.org/10.1006/jagm.1998.0993}
  {\path{doi:10.1006/jagm.1998.0993}}.

\bibitem{gurobi}
{Gurobi Optimization, LLC}.
\newblock {Gurobi Optimizer Reference Manual}, 2021.
\newblock URL: \url{https://www.gurobi.com}.

\bibitem{JaMMSV03}
Kamal Jain, Mohammad Mahdian, Evangelos Markakis, Amin Saberi, and Vijay~V.
  Vazirani.
\newblock Greedy facility location algorithms analyzed using dual fitting with
  factor-revealing {LP}.
\newblock {\em Journal of the ACM}, 50(6):795--824, 2003.
\newblock \href {https://doi.org/10.1145/950620.950621}
  {\path{doi:10.1145/950620.950621}}.

\bibitem{JaMaSa02}
Kamal Jain, Mohammad Mahdian, and Amin Saberi.
\newblock A new greedy approach for facility location problems.
\newblock In {\em Proc. 34th ACM Symp. on Theory of Computing (STOC)}, pages
  731--740, 2002.
\newblock \href {https://doi.org/10.1145/509907.510012}
  {\path{doi:10.1145/509907.510012}}.

\bibitem{KaKeRa03}
Anna~R. Karlin, Claire Kenyon, and Dana Randall.
\newblock Dynamic {TCP} acknowledgement and other stories about e/(e - 1).
\newblock {\em Algorithmica}, 36(3):209--224, 2003.

\bibitem{KoPlRa00}
Madhukar~R. Korupolu, C.~Greg Plaxton, and Rajmohan Rajaraman.
\newblock Analysis of a local search heuristic for facility location problems.
\newblock {\em Journal of Algorithms}, 37(1):146--188, 2000.
\newblock \href {https://doi.org/10.1006/jagm.2000.1100}
  {\path{doi:10.1006/jagm.2000.1100}}.

\bibitem{Li13}
Shi Li.
\newblock A 1.488 approximation algorithm for the uncapacitated facility
  location problem.
\newblock {\em Information and Computation}, 222:45--58, 2013.
\newblock \href {https://doi.org/10.1016/j.ic.2012.01.007}
  {\path{doi:10.1016/j.ic.2012.01.007}}.

\bibitem{LiPaWW18}
Xingwu Liu, Zhida Pan, Yuyi Wang, and Roger Wattenhofer.
\newblock Impatient online matching.
\newblock In {\em Proc. 29th Int. Symp. on Algorithms and Computation (ISAAC)},
  pages 62:1--62:12, 2018.
\newblock \href {https://doi.org/10.4230/LIPIcs.ISAAC.2018.62}
  {\path{doi:10.4230/LIPIcs.ISAAC.2018.62}}.

\bibitem{MaYeZh06}
Mohammad Mahdian, Yinyu Ye, and Jiawei Zhang.
\newblock Approximation algorithms for metric facility location problems.
\newblock {\em SIAM Journal on Computing}, 36(2):411--432, 2006.
\newblock \href {https://doi.org/10.1137/S0097539703435716}
  {\path{doi:10.1137/S0097539703435716}}.

\bibitem{Meyers01}
Adam Meyerson.
\newblock Online facility location.
\newblock In {\em Proc. 42nd IEEE Symp. on Foundations of Computer Science
  (FOCS)}, pages 426--431, 2001.
\newblock \href {https://doi.org/10.1109/SFCS.2001.959917}
  {\path{doi:10.1109/SFCS.2001.959917}}.

\bibitem{ShTaAa97}
David~B. Shmoys, {\'E}va Tardos, and Karen Aardal.
\newblock Approximation algorithms for facility location problems (extended
  abstract).
\newblock In {\em Proc. 29th ACM Symp. on Theory of Computing (STOC)}, pages
  265--274, 1997.
\newblock \href {https://doi.org/10.1145/258533.258600}
  {\path{doi:10.1145/258533.258600}}.

\bibitem{stanko2016crowdfunding}
Michael~A. Stanko and David~H. Henard.
\newblock How crowdfunding influences innovation.
\newblock {\em MIT Sloan Management Review}, 57(3):15, 2016.

\end{thebibliography}

\end{document}